\definecolor{DarkGreen}{rgb}{0.1,0.5,0.1}
\renewcommand*{\backref}[1]{}
\renewcommand*{\backrefalt}[4]{%
    \ifcase #1 (Not cited.)%
    \or        (Cited on page~#2)%
    \else      (Cited on pages~#2)%
    \fi}
\Crefname{property}{Property}{Properties}
\Crefname{thm}{Theorem}{Theorems}
\Crefname{example}{Example}{Examples}
\Crefname{table}{Table}{Tables}
\tikzset{snake it/.style={decorate, decoration=snake}}
\newcommand*{\tikzmk}[1]{\tikz[remember picture,overlay,] \node (#1) {};\ignorespaces}
\newcommand{\boxit}[1]{\tikz[remember picture,overlay]{\node[yshift=3pt,xshift=4pt,fill=#1,opacity=.25,fit={(A)($(B)+(1.0\linewidth,.8\baselineskip)$)}] {};}\ignorespaces}
\colorlet{mygray}{gray!40}
\let\oldnl\nl
\newcommand{\nonl}{\renewcommand{\nl}{\let\nl\oldnl}}
\theoremstyle{definition}
\newtheorem{dfn}{Definition}
\newenvironment{example}{\pushQED{\qed}\examplex}{\popQED\endexamplex}
\theoremstyle{remark}
\Crefname{claim}{Claim}{Claims}
\newcommand{\stp}{\text{strategyproofness}}
\renewcommand{\>}{{\succ}}
\newcommand{\MMS}{\textrm{\textup{MMS}}}
\newcommand{\NP}{\textrm{\textup{NP}}}
\newcommand{\NPH}{\textrm{\textup{NP-hard}}}
\newcommand{\NPC}{\textrm{\textup{NP-complete}}}
\newcommand{\EF}[1]{\ifstrempty{#1}{\textrm{\textup{EF}}}{\textrm{\textup{EF{$#1$}}}}}
\newcommand{\EFX}{\textrm{\textup{EFX}}}
\renewcommand{\L}{\mathcal{L}}
\newcommand{\RM}{\textrm{\textup{RM}}}
\newcommand{\PITFull}{\textup{\textsc{Partition Into Triangles}}}
\newcommand{\PIT}{\textup{\textsc{PIT}}}
\newcommand{\PO}{\textup{PO}}
\newcommand{\SDQ}{\textrm{\textup{SDQ}}}
\newcommand{\SP}{\textrm{\textup{SP}}}
\newcommand{\ThreeSAT}{\textup{\textsc{3-SAT}}}
\newcommand{\TTTSAT}{\textup{\textsc{(2/2/3)-SAT}}}
\newcommand{\EFkRMExistence}[1]{\ifstrempty{#1}{\textup{\textsc{\EF{k}+\RM{}-Existence}}}{\textup{\textsc{\EF{#1}+\RM{}-Existence}}}}
\begin{document}

\title{Fair and Efficient Allocations under Lexicographic Preferences}
\date{}

\author[1]{Hadi Hosseini}
\author[2]{Sujoy Sikdar}
\author[3]{Rohit Vaish}
\author[4]{Lirong Xia}
\affil[1]{Pennsylvania State University\\
	{\small\texttt{hadi@psu.edu}}}
\affil[2]{Binghamton University\\
	{\small\texttt{ssikdar@binghamton.edu}}}
\affil[3]{Tata Institute of Fundamental Research\\
	{\small\texttt{rohit.vaish@tifr.res.in}}}
\affil[4]{Rensselaer Polytechnic Institute\\
	{\small\texttt{xial@cs.rpi.edu}}}

\maketitle

\begin{abstract}
Envy-freeness up to any good (\EFX{}) provides a strong and intuitive guarantee of fairness in the allocation of indivisible goods. But whether such allocations always exist or whether they can be efficiently computed remains an important open question. We study the existence and computation of \EFX{} in conjunction with various other economic properties under \emph{lexicographic preferences}--a well-studied preference model in artificial intelligence and economics. In sharp contrast to the known results for additive valuations, we not only prove the existence of \EFX{} and Pareto optimal allocations, but in fact provide an algorithmic characterization of these two properties. We also characterize the mechanisms that are, in addition, strategyproof, non-bossy, and neutral. When the efficiency notion is strengthened to rank-maximality, we obtain non-existence and computational hardness results, and show that tractability can be restored when \EFX{} is relaxed to another well-studied fairness notion called maximin share guarantee (\MMS{}).
\end{abstract}

\section{Introduction}

Fair and efficient allocation of scarce resources is a fundamental problem in economics and computer science. The quintessential fairness notion---\emph{envy-freeness}---enjoys strong existential and computational guarantees for \emph{divisible} resources~\citep{varian1974equity}. However, in notable applications such as course allocation~\citep{budish2011combinatorial} and property division~\citep{PW12divorcing} that involve \emph{indivisible} resources, (exact) envy-freeness could be too restrictive. In these settings, it is natural to consider notions of approximate fairness such as \emph{envy-freeness up to any good} (\EFX{}) wherein pairwise envy can be eliminated by the removal of any good in the envied bundle~\citep{caragiannis2019unreasonable}.

\EFX{} is arguably the closest analog of envy-freeness in the indivisible setting, and, as a result, has been actively studied especially for the domain of additive valuations. However, it also suffers from a number of limitations: First, barring a few special cases, the existence and computation of \EFX{} allocations remains an open problem. Second, for additive valuations, \EFX{} can be incompatible with \emph{Pareto optimality} (\PO{})---a fundamental notion of economic efficiency~\citep{PR20almost}. Finally, \EFX{} could also be at odds with \emph{strategyproofness}~\citep{amanatidis2017truthful}, which is another desirable property in the economic analysis of allocation problems.

The aforementioned limitations of \EFX{} prompt us to explore the \emph{domain restriction} approach in search of positive results~\citep{ELP16preference}. Specifically, we deviate from the framework of cardinal preferences for which \EFX{} allocations have been most extensively studied, and instead focus on the purely ordinal domain of \emph{lexicographic preferences}.  

Lexicographic preferences have been widely studied in psychology~\citep{GG96reasoning}, machine learning~\citep{SM06complexity}, and social choice~\citep{T70problem} as a model of human decision-making. Several real-world settings such as evaluating job candidates and the desirability of a product involve lexicographic preferences over the set of features. In the context of fair division, too, lexicographic preferences can arise naturally. For example, when dividing an inheritance consisting of a house, a car, and some home appliances, a stakeholder might prefer any division in which she gets the house over one where she doesn't (possibly because of its sentimental value), subject to which she might prefer any outcome that includes the car over one that doesn't, and so on.

On the computational side, lexicographic preferences provide a succinct language for representing preferences over combinatorial domains~\citep{saban2014note,LMX18voting}, and have led to numerous positive results at the intersection of artificial intelligence and economics~\citep{FLS18complexity,hosseini2019multiple}. Motivated by these considerations, our work examines the existence and computation of \emph{fair} (i.e., \EFX{}) and \emph{efficient} allocations from the lens of lexicographic preferences.

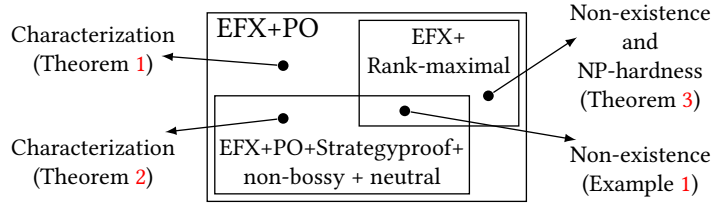
\begin{figure}[h]
    \centering
	\tikzset{every picture/.style={line width=0.5pt}}
	\begin{tikzpicture}
		\footnotesize
		\def\toprightX{4.2}
		\def\toprightY{2.5}
		\draw (0,0) rectangle (\toprightX,\toprightY);
		\node (1) at (0.8,\toprightY-0.2) {\normalsize{\EFX{}+\PO{}}};
		\draw (\toprightX/2-0.1,\toprightY/2-0.25) rectangle (\toprightX-0.1,\toprightY-0.1);
		\node[align=center] (2) at (3*\toprightX/4-0.1,\toprightY-0.5) {{\EFX{}+}\\{Rank-maximal}};
		\draw (0.1,0.1) rectangle (3*\toprightX/4+0.3,\toprightY-1.1);
		\node[align=center] (3) at (3*\toprightX/8+0.2,\toprightY-2) {{\EFX{}+\PO{}+Strategyproof+}\\{non-bossy + neutral}};
		\draw (-1.5,\toprightY-0.5) node[align=center] (8) {{Characterization}\\{(\Cref{thm:EFX_PO})}};
		\draw (\toprightX+1.5,\toprightY-0.6) node[align=center] (9) {{Non-existence}\\{and}\\{\NPH{}ness}\\{(\Cref{thm:EFX_RM_NP-complete})}};
		\draw (\toprightX+1.5,0.4) node[align=center] (10) {{Non-existence}\\{(\Cref{eg:SP+RM_NonExistence})}};
		\draw (-1.5,0.5) node[align=center] (11) {{Characterization}\\{(\Cref{thm:EFX_PO_SP_Neutral_NonBossy})}};
		\node[circle,fill=black,minimum size=4pt,inner sep=0pt] (4) at (1,\toprightY-0.7) {};
		\node[circle,fill=black,minimum size=4pt,inner sep=0pt] (5) at (1,1.1) {};
		\node[circle,fill=black,minimum size=4pt,inner sep=0pt] (6) at (\toprightX/2+0.5,\toprightY-1.3) {};
		\node[circle,fill=black,minimum size=4pt,inner sep=0pt] (7) at (\toprightX-0.5,\toprightY-1.1) {};
		\draw[shorten >=0.9cm,-latex] (4) -- (8.center);
		\draw[shorten >=-0.25cm,-latex] (5) -- (11.north east);
		\draw[shorten >=-0.2cm,-latex] (6) -- (10.north west);
		\draw[shorten >=1cm,-latex] (7) -- (9.north);
	\end{tikzpicture}    
    \caption{Summary of our theoretical results.}
    \label{fig:Summary_Of_Results}
\end{figure}

\paragraph{Our Contributions.}
\Cref{fig:Summary_Of_Results} summarizes our theoretical contributions.

\begin{itemize}
	\item \textbf{\EFX{}+\PO{}}: Our first result provides a family of polynomial-time algorithms for computing \EFX{}+\PO{} allocations under lexicographic preferences. Furthermore, we show that \emph{any} \EFX{}+\PO{} allocation can be computed by some algorithm in this family, thus providing an algorithmic characterization of such allocations~(\Cref{thm:EFX_PO}). This result establishes a sharp contrast with the additive valuations domain where the two properties are incompatible in general.

	\item \textbf{\EFX{}+\PO{}+strategyproofness}:  The positive result for \EFX{}+\PO{} motivates us to investigate a more demanding property combination of \EFX{}, \PO{}, and strategyproofness. Once again, we obtain an algorithmic characterization~(\Cref{thm:EFX_PO_SP_Neutral_NonBossy}): Subject to some common axioms (non-bossiness and neutrality), any mechanism satisfying \EFX{}, \PO{}, and strategyproofness is characterized by a special class of \emph{quota-based serial dictatorship mechanisms}~\citep{Papai00:Strategyproofquotas,hosseini2019multiple}.

	\item \textbf{\EFX{}+rank-maximality}: When the efficiency notion is strengthened to \emph{rank-maximality}, we encounter incompatibility with strategyproofness (\Cref{eg:SP+RM_NonExistence}) as well as with \EFX{} (\Cref{eg:EFk+RM_NonExistence}). Furthermore, checking the existence of \EFX{} and rank-maximal allocations turns out to be \NPC{}~(\Cref{thm:EFX_RM_NP-complete}), suggesting that our algorithmic results are, in a certain sense, `maximal'. The intractability persists even when \EFX{} is relaxed to envy-freeness up to $k$ goods ($\EF{k}$)~(\Cref{thm:EFk_RM_NP-complete}), but efficient computation is possible if \EFX{} is relaxed to another well-studied fairness notion called \emph{maximin share guarantee} or \MMS{}~(\Cref{thm:MMS_RM_Goods}).
\end{itemize}

\paragraph{Related Work.}
Envy-free solutions may not always exist for indivisible goods. As a result, the literature has focused on notions of approximate fairness, most notably \emph{envy-freeness up to one good} (\EF{1}) and its strengthening called \emph{envy-freeness up to any good} (\EFX{}). The former enjoys strong existential and algorithmic support, as an \EF{1} allocation always exists for general monotone valuations and can be efficiently computed. However, achieving \EF{1} together with economic efficiency seems non-trivial: For additive valuations, \EF{1}+\PO{} allocations always exist~\citep{caragiannis2019unreasonable,BKV18finding} but no polynomial-time algorithm is known for computing such allocations.

The stronger notion of \EFX{} has proven to be more challenging. As mentioned previously, the existence of \EFX{} for additive valuations remains an open problem. Additionally, \EFX{} and Pareto optimality are known to be incompatible for non-negative additive valuations~\citep{PR20almost} and is open for positive additive valuations.

The aforementioned limitations of \EFX{} have motivated the study of further relaxations or special cases in search of positive results. Some recent results establish the existence of partial allocations that satisfy \EFX{} after discarding a small number of goods while also fulfilling certain efficiency criteria~\citep{CGH19envy,CKM+20little}. Similarly, \EFX{} allocations have been shown to exist for the special case of three agents with additive valuations~\citep{CGM20efx}, or when the agents can be partitioned into two \emph{types}~\citep{M20existence}, or when agents have dichotomous preferences~\citep{ABF+20maximum}. For cardinal utilities, various multiplicative approximations of \EFX{} (and its variant that involves removing an \emph{average} good) have been considered~\citep{PR20almost,AMN20multiple,CGM20fair,FHL+20almost}. Another emerging line of work studies \EFX{} for \emph{non-monotone} valuations, i.e., when the resources consist of both goods and chores~\citep{CL20fairness,BBB+20envy}.

The interaction between fairness and efficiency is further complicated with the addition of \emph{strategyproofness} due to several fundamental impossibility results both in deterministic \citep{zhou1990conjecture} as well as randomized settings~\citep{Bogomolnaia01:New,kojima2009random}.

Indeed, while ordinal efficiency is compatible with envy-freeness, such outcomes cannot, in general, be achieved via (weakly) strategyproof mechanisms even under strict preferences~\citep{kojima2009random}. Moreover, sd-efficiency and sd-strategyproofness (here, \emph{sd} stands for stochastic dominance) are incompatible even with a weak notion of stochastic fairness called equal treatment of equals~\citep{aziz2017impossibilities}. In a similar vein, for deterministic mechanisms, any strategyproof mechanism could fail to satisfy \EF{1} even for two agents under additive valuations~\citep{amanatidis2017truthful}.

Lexicographic preferences have been successfully used as a domain restriction to circumvent impossibility results in mechanism design~\citep{Sikdar2017:Mechanism,FLS18complexity}. In fair division of indivisible goods, lexicographic (sub)additive utilities have facilitated constant-factor approximation algorithms for egalitarian and Nash social welfare objectives~\citep{BBL+17positional,N20fairly}. \citet{hosseini2019multiple} show that under lexicographic preferences, a mechanism is Pareto optimal, strategyproof, non-bossy, and neutral if and only if it is a serial dictatorship quota mechanism. In randomized settings, too, lexicographic preferences have led to the design of mechanisms that simultaneously satisfy stochastic efficiency, envy-freeness, and \stp{}~\citep{SV15allocation,hosseini2019multiple}.

\section{Preliminaries}

\paragraph{Model} For any $k \in \mathbb{N}$, define $[k] \coloneqq \{1,\dots,k\}$. An \emph{instance} of the allocation problem is a tuple $\langle N, M, \succ \rangle$, where $N \coloneqq [n]$ is a set of $n$ {\em agents}, $M$ is a set of $m$ \emph{goods}, and $\succ \, \coloneqq (\succ_1, \dots, \succ_n)$ is a {\em preference profile} that specifies the ordinal preference of each agent $i \in N$ as a linear order $\succ_i \, \in \L$ over the set of goods; here, $\L$ denotes the set of all (strict and complete) linear orders over $M$.

\paragraph{Allocation and bundles} A \emph{bundle} is any subset $X \subseteq M$ of the set of goods. An {\em allocation} $A=(A_1,\dots,A_n)$ is an $n$-partition of $M$, where $A_i\subseteq M$ is the bundle assigned to agent $i$. We will write $\Pi$ to denote the set of all $n$-partitions of $M$. We say that allocation $A$ is {\em partial} if $\bigcup_{i \in N} A_i \subset M$, and \emph{complete} if $\bigcup_{i \in N} A_i = M$.

\paragraph{Lexicographic preferences}
We will assume that agents' preferences over the bundles are given by the lexicographic extension of their preferences over individual goods. Informally, this means that if an agent ranks the goods in the order $a \succ b \succ c \succ \dots$, then it prefers a bundle containing $a$ over any other bundle that doesn't, subject to that, it prefers a bundle containing $b$ over any other bundle that doesn't, and so on. Formally, given any pair of bundles $X,Y \subseteq M$ and any linear order $\succ_i \, \in \L$, we have $X \succ_i Y$ if and only if there exists a good $g \in X \setminus Y$ such that $\{g' \in Y : g' \succ_i g\} \subseteq X$. Notice that since $\succ_i$ is a linear order over $M$, the corresponding lexicographic extension is a linear order over $2^M$.

For any agent $i \in N$ and any pair of bundles $X,Y \in M$, we will write $X \succeq_i Y$ if either $X \succ_i Y$ or $X=Y$.

\paragraph{Envy-freeness} Given a preference profile $\>$, an allocation $A$ is said to be (a) \emph{envy-free} (\EF{}) if for every pair of agents $i,h \in N$, we have $A_i \succeq_i A_h$; (b) \emph{envy-free up to any good} (\EFX{}) if for every pair of agents $i,h \in N$ such that $A_h \neq \emptyset$ and every good $j \in A_h$, we have $A_i \succeq_i A_h \setminus \{j\}$, and (c) \emph{envy-free up to $k$ goods} (\EF{k}) if for every pair of agents $i,h \in N$ such that $A_h \neq \emptyset$, there exists a set $S \subseteq A_h$ such that $|S| \leq k$ and $A_i \succeq_i A_h \setminus S$. Clearly, $\EFX{} \Rightarrow \EF{1} \Rightarrow \EF{2} \Rightarrow  \dots$.

\paragraph{Maximin Share} An agent's maximin share is its most preferred bundle that it can guarantee itself as a divider in an $n$-person cut-and-choose procedure against adversarial opponents~\citep{budish2011combinatorial}. Formally, the maximin share of agent $i$ is given by $\MMS_i \coloneqq \max_{A \in \Pi} \min_{i} \{A_1,\dots,A_n\}$, where $\min\{\cdot\}$ and $\max\{\cdot\}$ denote the least-preferred and most-preferred bundles with respect to $\succ_i$. An allocation $A$ satisfies \emph{maximin share guarantee} (\MMS{}) if each agent receives a bundle that it weakly prefers to its maximin share. That is, the allocation $A$ is \MMS{} if for every $i \in N$, $A_i \succeq_i \MMS_i$. It is easy to see that $\EF{} \Rightarrow \MMS{}$. Additionally, for lexicographic preferences, we have that $\EFX{} \Rightarrow \MMS{}$ (the converse is not true) while \EF{1} and \MMS{} can be incomparable (see \Cref{sec:app:prop:EFX_implies_MMS} of the appendix).

\paragraph{Pareto optimality} Given a preference profile $\>$, an allocation $A$ is said to be \emph{Pareto optimal} (\PO{}) if there is no other allocation $B$ such that $B_i \succeq_i A_i$ for every agent $i \in N$ and $B_k \succ_k A_k$ for some agent $k \in N$. 

\paragraph{Rank-maximality} A \emph{rank-maximal} (\RM{}) allocation is one that maximizes the number of agents who receive their favorite good, subject to which it maximizes the number of agents who receive their second favorite good, and so on~\citep{IKM+06rank,P13capacitated}. Given an allocation $A$, its \emph{signature} refers to a tuple $(n_1,n_2,\dots,n_m)$ where $n_i$ is the number of agents who receive their $i^\text{th}$ favorite good (note that an agent can contribute to multiple $n_i$'s). All rank-maximal allocations for a given instance have the same signature. Computing \emph{some} rank-maximal allocation for a given instance is easy: Assign each good to an agent that ranks it the highest among all agents (tiebreak arbitrarily). This procedure provides a computationally efficient way of computing the signature of a rank-maximal allocation as well as verifying whether a given allocation is rank-maximal. Notice that rank-maximality is a strictly stronger requirement than Pareto optimality.

\paragraph{Mechanism} A mechanism $f: \L^n \to \Pi$ is a mapping from preference profiles to allocations. For any preference profile $\succ \, \in \L^n$, we use $f(\>)$ to denote the allocation returned by $f$, and $f_i(\>)$ to denote the bundle assigned to agent $i$.

\paragraph{Properties of mechanisms} A mechanism $f: \L^n \to \Pi$ is said to satisfy \EF{}\,/\,\EFX{}\,/\,\EF{k}\,/\,\PO{}\,/\,\RM{} if for every preference profile $\succ \, \in \L^n$, the allocation $f(\>)$ has that property. In addition, a mechanism $f$ satisfies
\begin{itemize}
	\item \emph{strategyproofness} (\SP{}) if no agent can improve by misreporting its preferences. That is, for every preference profile $\succ \, \in \L^n$, every agent $i \in N$, and every (misreported) linear order $\>'_i \in \L$, we have $f_i(\>) \succeq_i f_i(\>')$, where $\>' \coloneqq (\>_1,\dots,\>_{i-1},\>'_i,\>_{i+1},\dots,\>_n)$.
	
	\item \emph{non-bossiness} if no agent can modify the allocation of another agent by misreporting its preferences without changing its own allocation. That is, for every profile $\succ \, \in \L^n$, every agent $i\in N$, and every (misreported) linear order $\>'_i \in \L$, we have  $f_i(\>')=f_i(\>) \Rightarrow f(\>')=f(\>)$, where $\>' \coloneqq (\>_1,\dots,\>_{i-1},\>'_i,\>_{i+1},\dots,\>_n)$.
	
	\item \emph{neutrality} if relabeling the goods results in a consistent change in the allocation. That is, for every preference profile $\succ \, \in \L^n$ and every relabeling of the goods $\pi: M \rightarrow M$, it holds that $f(\pi(\>)) = \pi(f(\>))$, where $\pi(\>) \coloneqq (\pi(\>_1),\dots,\pi(\>_n))$ and $\pi(A) \coloneqq (\pi(A_1),\dots,\pi(A_n))$ for any allocation $A = (A_1,\dots,A_n)$.
\end{itemize}

\section{\EFX{} and Pareto Optimality}

Recall that for additive valuations, establishing the existence of \EFX{} allocations remains an open problem, and there exist instances where no allocation is simultaneously \EFX{} and \PO{}~\citep{PR20almost}. Our first result (\Cref{thm:EFX_PO}) shows that there is no conflict between fairness and efficiency for lexicographic preferences: Not only does there exist a family of polynomial-time algorithms that always return \EFX{}+\PO{} allocations, but \emph{every} \EFX{}+\PO{} allocation can be computed by some algorithm in this family. We will start with an easy observation concerning \EFX{} allocations.

\begin{restatable}{prop}{EFXProperty}
An allocation $A$ is \EFX{} if and only if each envied agent in $A$ gets exactly one good.
\label{prop:efx_property}
\end{restatable}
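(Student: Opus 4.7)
The plan is to prove both directions by a direct case analysis exploiting the structure of the lexicographic extension. The key fact I will use repeatedly is that, under lexicographic preferences, adding any good to a bundle makes it strictly better (in particular, any nonempty bundle is strictly preferred to the empty set, and $X \succ_i Y$ whenever $Y \subsetneq X$). This follows straight from the definition: for $Y \subsetneq X$, any $g \in X \setminus Y$ trivially satisfies $\{g' \in Y : g' \succ_i g\} \subseteq X$.

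For the forward direction, I will argue by contraposition: suppose an envied agent $h$ has $|A_h| \geq 2$, and let $i$ be an agent with $A_h \succ_i A_i$. Since $A$ is a partition, $A_h \cap A_i = \emptyset$, so the lexicographic witness $g \in A_h \setminus A_i$ must satisfy $\{g' \in A_i : g' \succ_i g\} = \emptyset$; equivalently, $g \succ_i g'$ for every $g' \in A_i$. Let $g^*$ be the $\succ_i$-maximal good in $A_h$; then $g^* \succeq_i g \succ_i g'$ for all $g' \in A_i$. Because $|A_h| \geq 2$, there is some $j \in A_h \setminus \{g^*\}$, and the bundle $A_h \setminus \{j\}$ still contains $g^*$. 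Applying the lexicographic definition with $g^*$ as the witness shows $A_h \setminus \{j\} \succ_i A_i$, contradicting \EFX{}.

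For the reverse direction, I assume every envied agent in $A$ gets exactly one good, and check the \EFX{} condition for an arbitrary pair $i,h$ with $A_h \neq \emptyset$ and arbitrary $j \in A_h$. If $h$ is not envied by $i$, then $A_i \succeq_i A_h$, and since removing $j$ makes $A_h$ a strict subset of itself we have $A_h \succ_i A_h \setminus \{j\}$, hence $A_i \succ_i A_h \setminus \{j\}$. If $h$ is envied, the hypothesis forces $|A_h| = 1$, so $A_h \setminus \{j\} = \emptyset$, and $A_i \succeq_i \emptyset$ holds trivially (with equality if $A_i = \emptyset$, and strictly otherwise by the monotonicity observation above). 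Both cases yield the \EFX{} inequality.

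No step looks genuinely hard; the only subtle point is to exploit disjointness of $A_h$ and $A_i$ to upgrade the generic lexicographic witness into the statement ``every good in $A_h$ dominates every good in $A_i$'', which is what lets the removal of any non-top good preserve envy. This is the crux of the forward direction and reflects why \EFX{} is so restrictive under lexicographic preferences.
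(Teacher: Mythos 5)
Your proof is correct and complete; the paper itself states this proposition as ``an easy observation'' and never writes out a proof, so there is nothing to diverge from. Your argument is exactly the natural one: the disjointness of $A_i$ and $A_h$ upgrades the lexicographic witness to ``every good in $A_h$ beats every good in $A_i$,'' which kills \EFX{} whenever an envied agent holds two or more goods, and the converse follows from strict monotonicity of the lexicographic extension under set inclusion together with the triviality of $A_i \succeq_i \emptyset$.
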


\paragraph{Description of algorithm}
Each algorithm in the family (Algorithm~\ref{alg:EFX+PO}) is specified by an ordering $\sigma$ over the agents, and consists of two phases. Phase 1 involves a single round of serial dictatorship according to $\sigma$. Phase 2 assigns the remaining goods among the \emph{unenvied agents} according to a picking sequence $\tau$. Note that the set of unenvied agents after Phase 1 must be nonempty; in particular, the last agent in $\sigma$ belongs to this set since every other agent prefers the good that it picked in Phase 1 over any good in the last agent's bundle.

\begin{algorithm}[h]
\DontPrintSemicolon
 \linespread{1.2}
\KwIn{An instance $\langle N, M, \> \rangle$ with lexicographic preferences}
\Parameters{A permutation $\sigma: N \rightarrow N$ of the agents}
\KwOut{An allocation $A$}
$A \leftarrow (\emptyset,\dots,\emptyset)$\;
\Comment{\scriptsize{Phase 1: Serial dictatorship for assigning $n$ goods}}
\tikzmk{A}
Agents arrive according to $\sigma$, and each picks a favorite good from the set of remaining goods. Update the partial allocation $A$.\;
\nonl \tikzmk{B}
\boxit{mygray}
\Comment{\scriptsize{Phase 2: Allocate leftover goods via picking sequence}}
\oldnl \tikzmk{A}
\uIf{the set of remaining goods is nonempty}{$U \leftarrow \{i \in N : i \text{ is not envied by any agent under $A$}\}$.\;
Fix any picking sequence $\tau$ of length $m-n$ consisting only of the agents in $U$ (i.e., the \emph{unenvied} agents).\;
Assign remaining goods according to $\tau$ and update $A$.\;}
\tikzmk{B}
\boxit{mygray}
\KwRet{$A$}
\caption{\EFX{}+\PO{}}
\label{alg:EFX+PO}
\end{algorithm}

\begin{restatable}{thm}{thmefxpo}
For any ordering $\sigma$ of the agents, the allocation computed by Algorithm~\ref{alg:EFX+PO} satisfies \EFX{} and \PO{}. Conversely, any \EFX{}+\PO{} allocation can be computed by Algorithm~\ref{alg:EFX+PO} for some choice of $\sigma$.
\label{thm:EFX_PO}
\end{restatable}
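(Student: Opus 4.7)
I would handle the two directions separately. For the forward direction, I first show EFX using \Cref{prop:efx_property}: it suffices to ensure every envied agent in the final allocation holds a single good. Note that any leftover good $g$ after Phase~1 satisfies $a_i \succ_i g$ for every $i$ (where $a_i$ is $i$'s Phase~1 pick), since otherwise $i$ would have picked $g$ when $i$'s turn came. Using this, I argue that every $k \in U$ stays unenvied after Phase~2: if some $j$ envied $k$ at the end, then, since bundles are disjoint and $j$'s top in $A_j$ equals $a_j$ (all of $j$'s Phase~2 goods are leftovers), $j$'s favorite good in $A_k$---which is either $a_k$ or a leftover in $A_k$---would beat $a_j$. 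The first case would mean $j$ envied $k$ already after Phase~1, contradicting $k \in U$; the second contradicts the leftover observation. So every envied agent in the output has a singleton bundle, and EFX follows from \Cref{prop:efx_property}. For PO, I observe that the full algorithm is a picking sequence $\pi_1,\dots,\pi_m$ in which each picker takes the $\succ_{\pi_t}$-best remaining good $g_t$; if some $B$ weakly Pareto dominates the output $A$, induction on $t$ shows $g_t \in B_{\pi_t}$. Indeed, the highest-ranked good in $A_{\pi_t}\triangle B_{\pi_t}$ under $\succ_{\pi_t}$ cannot be ranked above $g_t$ (such goods are pinned in $B$ by the inductive hypothesis) nor equal $g_t$ on the $A$-side (which would yield $A_{\pi_t}\succ_{\pi_t}B_{\pi_t}$), forcing $g_t \in B_{\pi_t}$. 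Hence $B=A$.

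For the converse, given an EFX+PO allocation $A$, write $t_i \coloneqq \mathrm{top}_i(A)$. I would first establish two structural lemmas. \textbf{(L1)}~No agent $i$ prefers any ``extra'' $g \in A_j\setminus\{t_j\}$ (with $j \ne i$) over $t_i$: if it did, then $i$ would envy $j$, and the EFX check with $g' = t_j$ would fail, since $A_j\setminus\{t_j\}$ still contains the preferred $g$. \textbf{(L2)}~The relation defined by $k \prec i$ whenever $t_k \succ_i t_i$ is acyclic: a cycle would admit a rotation that swaps tops along the cycle (leaving the extras of unenvied members in place), which under lex strictly improves every cycle participant and leaves others unchanged, contradicting PO. Taking $\sigma$ to be any topological extension of $\prec$, Phase~1 of Algorithm~\ref{alg:EFX+PO} with this $\sigma$ assigns each agent $i$ exactly $t_i$: by (L1), every good ranked above $t_i$ for $i$ is the top $t_k$ of some other agent $k$ with $k \prec i$, and by the topological ordering such $k$ has already picked $t_k$. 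Moreover, by (L1) and \Cref{prop:efx_property}, the envy structure after Phase~1 agrees with that of $A$, so the algorithm's unenvied set $U$ coincides with the set of unenvied agents in $A$, and the leftover pool is precisely $\bigcup_{j \in U}(A_j\setminus\{t_j\})$. For Phase~2 I run a second TTC-style argument on this residual: among unenvied agents with remaining extras, draw $i \to j$ if $i$'s favorite remaining good lies in $A_j$. Cycles of length $\ge 2$ again yield a strict Pareto improvement over $A$ by rotating extras, so a self-loop exists at every stage; using the self-loop agent as the next entry in $\tau$ reproduces the intended extra, and iterating until exhaustion yields a $\tau$ that reproduces $A$.

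The main obstacle is establishing (L1), (L2), and the analogous Phase~2 self-loop claim. (L1) requires care about the EFX quantifier: one must choose the removal $g'$ adversarially (specifically $g' = t_j$) to expose the preferred extra. The Pareto-rotation argument behind (L2) and the Phase~2 claim must be checked across all combinations of envied and unenvied cycle members---for envied agents an entire singleton bundle is swapped, whereas for unenvied agents only a top (or extra) is exchanged while the rest of the bundle stays put---and in every case the lex comparison reduces to identifying the highest-ranked good in a two-element symmetric difference. Once these structural facts are in hand, the topological sort for $\sigma$ and the recursive TTC for $\tau$ are a matter of bookkeeping.
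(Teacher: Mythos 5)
Your proof is correct. The forward direction matches the paper's argument in substance: \EFX{} via \Cref{prop:efx_property} plus the observation that Phase~2 goods are strictly less preferred than Phase~1 picks, and \PO{} via a first-difference argument along the combined picking sequence (your induction that $g_t \in B_{\pi_t}$ is the same idea as the paper's ``first agent to receive a difference good,'' just packaged as an induction). The converse, however, takes a genuinely different route. The paper starts from the fact that every \PO{} allocation is induced by \emph{some} picking sequence $S$ (via iterated top-trading), and then shows that repeated occurrences of unenvied agents in the $n$-prefix of $S$ can be ``pushed behind'' the first appearances of other agents without changing the outcome, so that WLOG the $n$-prefix is a permutation; $\sigma$ and $\tau$ are then read off from $S$. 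You instead construct $\sigma$ directly: lemma (L1) (no agent prefers another agent's non-top good to its own top, by an adversarial \EFX{} removal of $t_j$ --- essentially a restatement of \Cref{prop:efx_property}), acyclicity of the top-good envy digraph via a Pareto-improving rotation, a topological sort for $\sigma$, and an iterated self-loop/TTC argument on the residual extras for $\tau$. Both are sound. Your version is more constructive and makes the envy structure explicit (in particular, that envy after Phase~1 coincides with envy in $A$), at the cost of more machinery; the paper's version is shorter but leans on the unproved-in-detail footnote that \PO{} allocations are picking-sequence outcomes and on a somewhat delicate sequence-rearrangement argument. One small point to tidy in a final write-up: both your argument and the paper's implicitly assume $m \geq n$ (so that $t_i$ is defined for every agent); the degenerate case $m < n$ needs agents with empty bundles placed at the end of $\sigma$, but this is routine.
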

\begin{proof}
We will start by showing that the allocation $A$ returned by Algorithm~\ref{alg:EFX+PO} satisfies \EFX{}. From \Cref{prop:efx_property}, it suffices to show that any envied agent gets exactly one good in $A$. Notice that any agent that is envied at the end of Phase 1 does not receive any good in Phase 2. Furthermore, the pairwise envy relations remain unchanged during Phase 2 since each agent has already picked its favorite available good in Phase 1, and because of lexicographic preferences, any goods assigned in Phase 2 are strictly less preferred. Thus, $A$ is \EFX{}.

To prove Pareto optimality (\PO{}), suppose, for contradiction, that $A$ is Pareto dominated by an allocation $B$. Then, there must exist some agent, say $i$, who receives a good under $A$ that it does not receive under $B$ (i.e., $A_i \setminus B_i \neq \emptyset$); we will call any such item a \emph{difference} good. Observe that the execution of Algorithm~\ref{alg:EFX+PO} can be described in terms of a combined picking sequence $\langle \sigma, \tau \rangle$. Thus, without loss of generality, we can define $i$ to be the \emph{first} agent according to $\langle \sigma, \tau \rangle$ to receive a difference good. Let $g$ denote the corresponding difference good picked by $i$, and note that $g \in A_i \setminus B_i$ by assumption.

Since $B$ Pareto dominates $A$ and $A_i \neq B_i$, we must have $B_i \, \>_i \, A_i$. For lexicographic preferences, this means that there exists a good $g' \in B_i \setminus A_i$ such that $g' \, \>_i \, g$. Since all agents preceding $i$ in $\langle \sigma, \tau \rangle$ pick the goods that they also own under $B$, the good $g'$ must be available (along with $g$) when it is $i$'s turn to pick. Thus, $i$ would not pick $g$, which is a contradiction. Hence, $A$ satisfies \PO{}.

To prove the converse, note that any \PO{} allocation can be induced by a picking sequence.\footnote{Indeed, in any \PO{} allocation, some agent must receive its favorite good (otherwise a cyclic exchange of the top-ranked goods gives a Pareto improvement). Add this agent to the picking sequence, and repeat the procedure for the remaining goods.} Given any \EFX{}+\PO{} allocation $A$, let $S$ denote the corresponding picking sequence. We claim that without loss of generality, the first $n$ positions in $S$ belong to $n$ different agents. Indeed, if some agent $i$ appears more than once in the $n$-prefix of $S$, then $|A_i|>1$. By \Cref{prop:efx_property}, $i$ must not be envied by any other agent. For lexicographic preferences, this means that the good picked by any other agent $j$ in its first appearance in $S$ is preferred by $j$ over every good picked by $i$.

Without loss of generality, let $i$ be the first agent with a repeated occurrence in $S$. Let $t_i$ denote the index (i.e., position in $S$) of the second appearance of $i$. Among all agents whose first appearance occurs after $t_i$, let $j$ denote the first one, and suppose this appearance occurs at position $t_j$ in the sequence $S$. Then, all positions between $t_i$ and until (but excluding) $t_j$ correspond to repeated occurrences. By the aforementioned argument, $j$ does not envy any of the corresponding agents. Furthermore, none of the corresponding agents prefer the good picked by $j$ over the ones that they picked, since they appear before $j$ in the sequence $S$.

Thus, a modified sequence where $j$ is pushed immediately before $t_i$ without making any other changes results in the same allocation. Repeated use of the same observation gives us that the repeated appearances of unenvied agents can be ``pushed behind''' the first appearances of other agents without loss of generality, implying that the $n$-prefix of $S$ is a permutation.

We can now instantiate Algorithm~\ref{alg:EFX+PO} with $\sigma$ as the $n$-prefix of $S$ and $\tau$ as $S \setminus \sigma$ to compute the allocation 
$A$.
\end{proof}

\section{Characterizing Strategyproof Mechanisms}

In addition to fairness and efficiency, an important desideratum for allocation mechanisms is strategyproofness. For additive valuations, strategyproofness is known to be incompatible even with \EF{1}~\citep{amanatidis2017truthful}. By contrast, for lexicographic preferences, we will show that strategyproofness can be achieved in conjunction with a stronger fairness guarantee (\EFX{}) as well as Pareto optimality, non-bossiness, and neutrality~(\Cref{thm:EFX_PO_SP_Neutral_NonBossy}). Indeed, a special case of the mechanism in Algorithm~\ref{alg:EFX+PO} where the last agent gets all the remaining goods characterizes these properties (Algorithm~\ref{alg:IQSD}).

\begin{algorithm}[h]
\DontPrintSemicolon
 \linespread{1.2}
\KwIn{An instance $\langle N, M, \> \rangle$ with lexicographic preferences}
\Parameters{A permutation $\sigma: N \rightarrow N$ of the agents}
\KwOut{An allocation $A$}
$A \leftarrow (\emptyset,\dots,\emptyset)$\;
Execute one round of serial dictatorship according to $\sigma$.\;
Assign all remaining goods to the last agent in $\sigma$.\;
\KwRet{$A$}
\caption{}
\label{alg:IQSD}
\end{algorithm}

Our characterization result builds upon an existing result of \citet[Theorem 5.6]{hosseini2019multiple} (see \Cref{prop:quota}) that characterizes four out of the five properties mentioned above (excluding \EFX{}) in terms of {\em Serial Dictatorship Quota Mechanisms} (\SDQ{}), as defined below.

\begin{dfn}
The Serial Dictatorship Quota $(\SDQ{})$ mechanism is specified by a permutation $\sigma: N \rightarrow N$ of the agents and a set of quotas $(q_1,\dots,q_n)$ such that $\sum_{i=1}^n q_i = m$. Given a lexicographic instance $\langle N, M, \> \rangle$ as input, the \SDQ{} mechanism considers agents in the order $\sigma$, and assigns the $i^\text{th}$ agent its most preferred bundle of size $q_i$ from the remaining goods. The resulting allocation is returned as output.
\label{dfn:sdq}
\end{dfn}

\begin{restatable}[\citealp{hosseini2019multiple}]{prop}{HLquota}
For lexicographic preferences, a mechanism is Pareto optimal, strategyproof, non-bossy, and neutral if and only if it is \SDQ{}.
\label{prop:quota}
\end{restatable}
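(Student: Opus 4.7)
The forward direction of the plan is to verify each of the four axioms directly from the definition of \SDQ{}. \emph{Pareto optimality} holds because in each round the acting agent picks its lexicographically favorite bundle of size $q_i$ from the goods still available; any Pareto-dominating allocation would force some agent to end up with a strictly preferred bundle of the same size drawn from a subset of what was originally offered to it, which is impossible. \emph{Strategyproofness} holds because the pool facing agent $\sigma(i)$ depends only on the true reports of $\sigma(1),\dots,\sigma(i-1)$, so $\sigma(i)$ cannot improve by deviating. \emph{Non-bossiness} follows because the residual pool after agent $\sigma(i)$'s turn is entirely determined by the bundles chosen so far: if $\sigma(i)$'s own bundle is unchanged under a misreport, neither is the pool, and hence neither are any of the subsequent picks. \emph{Neutrality} is immediate from the fact that relabeling the goods commutes with the ``pick favorite bundle of a given size'' operation.

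For the converse, the plan is an induction on the number of agents $n$. The inductive step requires isolating from any mechanism $f$ satisfying the four properties a first dictator $\sigma(1)$ and a quota $q_1$ such that for every profile $\succ$, $f_{\sigma(1)}(\succ)$ is $\sigma(1)$'s lexicographically favorite bundle of size $q_1$ from $M$. Once this is established, the mechanism induced by $f$ on the remaining $n-1$ agents and residual goods $M \setminus f_{\sigma(1)}(\succ)$ inherits all four properties, and the inductive hypothesis yields the \SDQ{} structure $(\sigma(2),\dots,\sigma(n))$ with quotas $(q_2,\dots,q_n)$, closing the reduction.

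To extract $\sigma(1)$ and $q_1$, I would argue as follows. Pareto optimality together with the lexicographic structure forces some agent to receive its top-ranked good in every outcome of $f$, since otherwise a top-trading cycle at the top layer would yield a Pareto improvement. Strategyproofness and non-bossiness then conspire to pin down the identity of this agent and the size of its bundle independently of the other agents' preferences: any residual dependence could be witnessed by a carefully chosen single-agent deviation that either contradicts strategyproofness at the deviator or contradicts non-bossiness by moving the dictator's bundle without changing the deviator's. Finally, neutrality lifts this invariance across relabelings of the goods, so that $\sigma(1)$ and $q_1$ are mechanism-level constants rather than profile-specific. The main obstacle is precisely this decomposition lemma: constructing the chain of misreport arguments that shows $f_{\sigma(1)}$ depends only on $\succ_{\sigma(1)}$ (given the fixed order $\sigma$ and quota $q_1$) is the heart of Theorem 5.6 of \citet{hosseini2019multiple}, and the cleanest path here is to invoke that theorem directly rather than reproduce the entire manipulation chain.
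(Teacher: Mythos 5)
The paper offers no proof of this proposition at all: it is imported verbatim as Theorem~5.6 of \citet{hosseini2019multiple}, so your concluding move of invoking that theorem directly for the converse is exactly what the paper does, and your forward-direction verification is the standard one. One small caution if you do keep the sketch: your \PO{} argument's claim that a Pareto-dominating allocation must give some agent a strictly preferred bundle \emph{of the same size} is not right under lexicographic preferences (an agent can strictly prefer a larger or smaller bundle); the clean fix is to view \SDQ{} as a picking sequence and run the ``first agent to receive a difference good'' argument that the paper uses in the proof of \Cref{thm:EFX_PO}.
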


The next result (\Cref{thm:EFX_PO_SP_Neutral_NonBossy}) provides an algorithmic characterization of \EFX{}, \PO{}, strategyproofness, non-bossiness, and neutrality for lexicographic preferences.

\begin{restatable}{thm}{thmIQSD}
For any ordering $\sigma$ of the agents, Algorithm~\ref{alg:IQSD} is \EFX{}, \PO{}, strategyproof, non-bossy, and neutral. Conversely, any mechanism satisfying these properties can be implemented by Algorithm~\ref{alg:IQSD} for some $\sigma$.
\label{thm:EFX_PO_SP_Neutral_NonBossy}
\end{restatable}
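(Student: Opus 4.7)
The plan is to reduce both directions to the two results already in hand, \Cref{prop:quota} and \Cref{thm:EFX_PO}. The key observation is that Algorithm~\ref{alg:IQSD} is simultaneously (i) an instantiation of Algorithm~\ref{alg:EFX+PO} whose Phase~2 picking sequence $\tau$ consists of repetitions of the last agent in $\sigma$, and (ii) the \SDQ{} mechanism under the same $\sigma$ with quota vector $(1,1,\ldots,1,m-n+1)$.

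For the forward direction, I would first note that the last agent in $\sigma$ is unenvied at the end of Phase~1 of Algorithm~\ref{alg:EFX+PO} (every other agent picked a strictly preferred good earlier in the serial dictatorship), so the constant $\tau$ is a legitimate choice in Algorithm~\ref{alg:EFX+PO}. \EFX{} and \PO{} then follow directly from \Cref{thm:EFX_PO}, while strategyproofness, non-bossiness, and neutrality follow from \Cref{prop:quota} applied to the \SDQ{} view of the mechanism.

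For the converse, suppose $f$ satisfies all five properties. By \Cref{prop:quota}, $f$ must be \SDQ{} for some permutation $\sigma$ and some quota vector $(q_1,\ldots,q_n)$ summing to $m$. The entire task then reduces to showing that $q_i = 1$ for every $i < n$, since this forces $q_n = m - n + 1$ and the mechanism coincides with Algorithm~\ref{alg:IQSD} instantiated by $\sigma$. Assume toward a contradiction that $q_i \geq 2$ for some $i < n$, and consider the profile in which every agent reports the identical preference $g_1 \succ g_2 \succ \cdots \succ g_m$. Under \SDQ{}, the position-$i$ agent receives a size-$q_i$ block of top-ranked remaining goods, and the position-$(i+1)$ agent (which exists because $i < n$) receives the block immediately below it; since the position-$i$ agent holds more than one good and is envied by the position-$(i+1)$ agent (whose every owned good ranks strictly below every good in the position-$i$ bundle), \Cref{prop:efx_property} yields a violation of \EFX{}, the desired contradiction.

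The only step requiring genuine care is this quota-forcing argument: one must set up the identical-preference instance and cleanly invoke the lexicographic comparison together with \Cref{prop:efx_property}. Everything else is a direct invocation of the two previously established results, so I do not expect a serious obstacle.
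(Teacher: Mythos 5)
Your proposal follows essentially the same route as the paper's proof in both directions: the forward direction is the same double reduction (Algorithm~\ref{alg:IQSD} viewed as a special case of Algorithm~\ref{alg:EFX+PO} for \EFX{}+\PO{}, and as an \SDQ{} with quotas $(1,\dots,1,m-n+1)$ for the remaining properties via \Cref{prop:quota}), and the converse uses the identical-preference profile together with \Cref{prop:efx_property} to force the quotas, exactly as the paper does. One small but genuine omission: your contradiction argument only rules out $q_i \geq 2$ for $i<n$, which does not by itself force $q_i=1$ --- you must also exclude $q_i=0$ (otherwise, e.g., $q_1=0$ and $q_n=m-n+2$ survives your argument), and you should dispose of the degenerate case $m<n$. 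The paper handles both points explicitly: for $m<n$ the claim follows from \Cref{thm:EFX_PO}, and for $m\geq n$ one first argues via \Cref{prop:efx_property} that every quota is at least $1$ (an agent with an empty bundle would envy some agent holding at least two goods under identical preferences). With that one-line patch your proof is complete and matches the paper's.
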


\begin{proof}
Note that Algorithm~\ref{alg:IQSD} is a special case of \SDQ{} for the quotas $q_i = 1$ for all $i \in [n-1]$ and $q_n=m-(n-1)$. Therefore, from \Cref{prop:quota}, it is \PO{}, strategyproof, non-bossy, and neutral. Furthermore, Algorithm~\ref{alg:IQSD} is also a special case of Algorithm~\ref{alg:EFX+PO} and is therefore \EFX{} (\Cref{thm:EFX_PO}).

To prove the converse, let $f$ be an arbitrary mechanism satisfying the desired properties. From \Cref{prop:quota}, $f$ must be an \SDQ{} mechanism for some ordering $\sigma$ and some set of quotas $(q_1,\dots,q_n)$ such that $\sum_{i=1}^n q_i = m$. If $m < n$, the claim follows easily from \Cref{thm:EFX_PO}, so we can assume, without loss of generality, that $m \geq n$. Then, by \Cref{prop:efx_property}, we must have that $q_i \geq 1$ for all $i \in [n]$. Therefore, it suffices to show that $q_i = 1$ for all $i \in [n-1]$.

Assume, without loss of generality, that $\sigma = (1,2,\dots,n)$. Consider a preference profile $\>$ with \emph{identical} preferences, i.e., $\>_i = \>_k$ for all $i,k \in [n]$. Let $g_1 \succ_i g_2 \succ_i \dots \succ_i g_m$ for any $i \in [n]$. Suppose, for contradiction, that $q_i > 1$ for some $i \in [n-1]$, and let $k \in [n-1]$ be the smallest index for which this happens. Since $f$ is an \SDQ{} mechanism, we have that $g_k \in f_k(\>)$ and $|f_k(\>)| = q_k > 1$. Then, for every $\ell > k$, agent $\ell$ envies agent $k$. By \Cref{prop:efx_property}, $f$ violates \EFX{}, which is a contradiction. Therefore, $f$ must be identical to Algorithm~\ref{alg:IQSD} for the ordering $\sigma$, as desired.
\end{proof}

\begin{restatable}{remark}{groupstrategyproof}
We note that any deterministic strategyproof and non-bossy mechanism is also group-strategyproof~\citep{Papai00:Strategyproof}. Therefore, Algorithm~\ref{alg:IQSD} also characterizes the set of \EFX{}, \PO{}, group-strategyproof, non-bossy, and neutral mechanisms under lexicographic preferences.
\label{rmk:groupstrategyproof}
\end{restatable}

In \Cref{prop:minimalityGoods} (whose proof is presented in \Cref{sec:app:prop:minimalityGoods} of the appendix), we show that the set of properties considered in \Cref{thm:EFX_PO_SP_Neutral_NonBossy} is \emph{minimal}. That is, dropping any property from the characterization necessarily allows for feasible mechanisms beyond those in Algorithm~\ref{alg:IQSD}.

\begin{restatable}{prop}{minimalityGoods}
The set $\{$\EFX{}, \PO{}, strategyproofness, non-bossiness, neutrality$\}$ is a minimal set of properties for characterizing the family of mechanisms in Algorithm~\ref{alg:IQSD}.
\label{prop:minimalityGoods}
\end{restatable}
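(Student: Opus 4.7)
The strategy is to verify minimality by producing, for each property $P$ in the set, an explicit mechanism that satisfies the other four properties but fails $P$ and lies outside the family of Algorithm~\ref{alg:IQSD}. Dropping \EFX{} is handled immediately by \Cref{prop:quota}: every \SDQ{} mechanism is \PO{}, strategyproof, non-bossy, and neutral, so taking any quota vector with some interior $q_i > 1$ for $i < n$ (for instance $(2, 1, \dots, 1, m-n-1)$ when $m \geq n+1$) gives a mechanism that is not of the form in Algorithm~\ref{alg:IQSD} and, by \Cref{prop:efx_property}, violates \EFX{} because an envied agent holds more than one good.

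For the remaining four cases, I plan to construct targeted counter-examples on small instances and then lift them to arbitrary $(n,m)$ in the natural way. To drop \emph{neutrality}, I fix a distinguished good $g^*$ and a distinguished agent $i^*$, and let the mechanism assign $g^*$ to $i^*$ first before running serial dictatorship over the remaining goods and agents; the output still gives each envied agent only one good, but the rule singles out a label, so neutrality fails while the other four properties are easy to maintain. To drop \emph{non-bossiness}, I let the permutation $\sigma$ used inside Algorithm~\ref{alg:IQSD} depend on a feature of the last agent's reported preference: since that agent receives all of the leftovers regardless of the permutation chosen over the other $n-1$ positions, that agent can shift others' bundles without changing their own. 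To drop \emph{strategyproofness}, I let $\sigma$ depend on a preference-based tie-breaking rule (e.g., ordering agents so that the one whose favorite good is least commonly top-ranked goes first), which creates a profitable misreport while the resulting allocation is still \EFX{} and \PO{}. For dropping \emph{Pareto optimality}, the goal is a mechanism that is strategyproof, non-bossy, neutral, and \EFX{} but not in Algorithm~\ref{alg:IQSD}; by \Cref{prop:quota} this mechanism is not \SDQ{} and therefore not \PO{}, so its construction needs to be designed carefully.

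The main obstacle will be the drop-\PO{} case, because \Cref{prop:quota} implies that strategyproofness, non-bossiness, and neutrality combined with \PO{} already force the \SDQ{} structure. To escape \SDQ{} without violating strategyproofness, the construction must decouple the deviation from any agent's incentive to misreport. I expect the mechanism to coincide with Algorithm~\ref{alg:IQSD} outside a carefully chosen set of profiles closed under all single-agent deviations, and to deviate inside that set in a way that preserves the singleton property for every envied agent, so that \EFX{} can be certified via \Cref{prop:efx_property}. Verifying that strategyproofness and non-bossiness survive on the boundary of this deviation set is the principal technical burden of the argument.
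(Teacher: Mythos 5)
Your overall strategy---one counterexample mechanism per dropped property---is exactly the paper's, and your \EFX{} case (an \SDQ{} mechanism with an interior quota exceeding $1$, killed by \Cref{prop:efx_property}) matches the paper's argument. But several of your individual constructions do not actually satisfy the four properties they are supposed to retain. For neutrality, forcing a distinguished good $g^*$ onto a distinguished agent $i^*$ breaks \PO{}: with two agents, $g^*=g_2$, $i^*=1$, and preferences $g_1 \succ_1 g_2$, $g_2 \succ_2 g_1$, swapping the two goods is a Pareto improvement. The paper instead keeps the allocation induced by a picking sequence (hence \PO{}) and makes only the \emph{order of later agents} depend on which good agent $1$ picks. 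For non-bossiness, conditioning $\sigma$ on the last agent's report is not innocuous: the set of goods left over after one round of serial dictatorship depends on the order of the first $n-1$ agents, so the last agent can steer $\sigma$ toward whichever order leaves it a lexicographically better remainder---a strategyproofness violation. The paper avoids this by conditioning on whether agents $a_1,a_2$ (whose positions, and hence bundles, are fixed) report identical preferences, and swapping only $a_3$ and $a_4$, who cannot influence the condition. For strategyproofness, your preference-based tie-breaking rule for $\sigma$ is very likely bossy (an agent can alter the ordering, and hence others' bundles, without changing its own pick), and you never check non-bossiness or neutrality for it; the paper's construction (leftovers go entirely to agent $n$ if it envies $n-1$, else round-robin between $n-1$ and $n$) requires a careful case analysis precisely to certify non-bossiness.

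The largest gap is the \PO{} case, which you defer as the ``principal technical burden'' and never construct. The paper's model allows partial allocations, so the mechanism that assigns every agent the empty bundle is trivially \EFX{}, strategyproof, non-bossy, and neutral, yet violates \PO{}; no deviation-set machinery is needed. Your reasoning that \Cref{prop:quota} forces any strategyproof, non-bossy, neutral, \PO{} mechanism to be \SDQ{} is correct but beside the point---the witness simply abandons \PO{} by wasting goods. As submitted, the proposal establishes minimality only for \EFX{}; the other four cases each need to be repaired or completed along the lines above.
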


The efficiency guarantee in \Cref{thm:EFX_PO_SP_Neutral_NonBossy} cannot be strengthened much further, as there exists an instance where any \emph{rank-maximal} (\RM{}) mechanism violates strategyproofness~(\Cref{eg:SP+RM_NonExistence}).

\begin{example}[\textbf{Strategyproofness and \RM{}}]
Consider the instance below with $k+2$ goods $g_1,\dots,g_{k+2}$ and three agents:
\begin{align*}
a_1: ~&~ g_1 \, \> \, g_2 \, \> \, g_3 \, \> \dots \> \, g_{k+1} \, \> \, g_{k+2}\\ \nonumber
a_2: ~&~ g_1 \, \> \, g_2 \, \> \, g_3 \, \> \dots \> \, g_{k+1} \, \> \, g_{k+2}\\ \nonumber
a_3: ~&~ g_{2} \, \> \, g_3 \, \> \, g_4 \, \> \dots \> \, g_{k+2} \, \> \, g_1. \nonumber
\end{align*}
Each of the goods $g_2,\dots,g_{k+2}$ is ranked higher by $a_3$ than by $a_1$ or $a_2$, and therefore must be assigned to $a_3$ in any rank-maximal allocation. Suppose, under truthful reporting, $g_1$ is assigned to $a_1$, and $a_2$ gets an empty bundle. Then, $a_2$ could falsely report $g_3$ as its favorite good. By rank-maximality, $g_3$ is now assigned to $a_2$, resulting in a strict improvement.
\label{eg:SP+RM_NonExistence}
\end{example}

The non-existence result in \Cref{eg:SP+RM_NonExistence} prompts us to forego strategyproofness (as well as non-bossiness and neutrality) and focus only on (approximate) envy-freeness and rank-maximality.

\section{Envy-Freeness and Rank-Maximality}

For lexicographic preferences, it is easy to see that a complete allocation is envy-free if and only if each agent receives its favorite good. Checking the existence of an envy-free allocation therefore boils down to computing a (left-)perfect matching in a bipartite graph where the left and the right vertex sets correspond to the agents and the goods, respectively, and the edges denote the top-ranked good of each agent. If an envy-free partial allocation of the top-ranked goods exists, then it can be extended to a complete rank-maximal allocation by assigning each remaining good to an agent that has the highest rank for it (note that the assignment of the remaining goods does not introduce any envy). Thus, the existence of an envy-free and rank-maximal allocation can be checked efficiently for lexicographic preferences (\Cref{prop:EF_RM_Polytime}).

\begin{restatable}[]{prop}{EFRM}
There is a polynomial-time algorithm that, given a lexicographic instance as input, computes an envy-free and rank-maximal allocation, whenever one exists.
\label{prop:EF_RM_Polytime}
\end{restatable}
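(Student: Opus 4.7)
The plan is to formalize the two-step strategy outlined just above the proposition: first reduce envy-freeness under lexicographic preferences to a bipartite matching problem on top goods, and then extend the matching greedily to obtain rank-maximality.

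First I would establish the key characterization that a complete allocation is envy-free if and only if every agent receives its top-ranked good. The forward direction follows because, under the lexicographic extension, any bundle containing an agent's top good strictly dominates any bundle not containing it; hence an agent not receiving its top good envies whichever agent holds it. The reverse direction is immediate by the same observation. Based on this equivalence, I would construct the bipartite graph $G = (N \cup M, E)$ with an edge $(i, g)$ whenever $g$ is agent $i$'s top-ranked good, and check in polynomial time whether $G$ admits a matching saturating $N$ (equivalently, whether the top goods of the agents are all distinct). If no such matching exists, then no envy-free---and hence no \EF{}+\RM{}---allocation exists, and the algorithm reports this.

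When a saturating matching does exist, I would use it to assign each agent its top good in Phase~1, and then, in Phase~2, assign every remaining good $g$ to an agent that ranks $g$ highest across $N$, breaking ties arbitrarily. The output is envy-free by construction. For rank-maximality I would invoke the characterization recalled in the Preliminaries, which says that the greedy ``assign each good to a top-ranking agent'' procedure produces a rank-maximal allocation, and that all rank-maximal allocations share the same signature. Since Phase~1 already achieves $n_1 = n$ (the largest possible value of the first signature coordinate given that the matching exists), and Phase~2 is exactly the greedy procedure on the residual goods, the full signature of the output will coincide with the unique rank-maximal signature.

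The main obstacle will be the rank-maximality argument: one needs to confirm that committing to the top-goods matching in Phase~1 does not sacrifice optimality on later signature coordinates. I would address this by observing that any allocation failing to give some agent its top good has $n_1 < n$ and is therefore lexicographically inferior to our output on the first coordinate, so every globally rank-maximal allocation must in fact assign all top goods to their owners; hence Phase~1 is forced, and Phase~2 then greedily optimizes the residual coordinates exactly as in the standard \RM{} procedure from the Preliminaries. Together with the polynomial runtime of bipartite matching and the linear-time greedy extension, this yields the desired efficient algorithm.
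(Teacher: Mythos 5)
Your proposal is correct and follows essentially the same route as the paper: characterize envy-freeness of a complete allocation as every agent receiving its top-ranked good, reduce existence to a bipartite matching on top goods, and extend the matching by assigning each leftover good to an agent that ranks it highest, which is exactly the greedy rank-maximal procedure from the Preliminaries. Your extra argument that Phase~1 is forced in any rank-maximal allocation (since $n_1=n$ is achievable) is a sound way to close the signature argument the paper leaves implicit.
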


Since an envy-free allocation is not guaranteed to exist, one could ask whether rank-maximality can always be achieved alongside \emph{approximate} envy-freeness; in particular, \EF{k} and \EFX{}. \Cref{eg:EFk+RM_NonExistence} shows that both of these notions could conflict with rank-maximality. Specifically, for any fixed $k \in \mathbb{N}$, an \EF{k}+\RM{} allocation could fail to exist. Since \EFX{} implies \EF{1}, a similar incompatibility holds for \EFX{}+\RM{} as well.

\begin{example}[\textbf{\EF{k} and \RM{}}]
Consider again the instance in \Cref{eg:SP+RM_NonExistence}. Any rank-maximal allocation assigns the goods $g_2,\dots,g_{k+2}$ to $a_3$. If $g_1$ is assigned to $a_1$, then $a_2$ gets an empty bundle and the pair $\{a_2,a_3\}$ violates \EF{k}.
\label{eg:EFk+RM_NonExistence}
\end{example}

Given the non-existence result in \Cref{eg:EFk+RM_NonExistence}, a natural question is whether there exists an efficient algorithm for checking the existence of an approximately envy-free and rank-maximal allocation. Unfortunately, the news here is also negative, as the problem turns out to be \NPC{} (\Cref{thm:EFX_RM_NP-complete}). Thus, while \EFX{} can always be achieved in conjunction with Pareto optimality (\Cref{thm:EFX_PO,thm:EFX_PO_SP_Neutral_NonBossy}), strengthening the efficiency notion to rank-maximality results in non-existence and computational hardness.

\begin{restatable}[]{thm}{EFXRM}
Determining whether a given instance admits an \EFX{} and rank-maximal allocation is \NPC{}.
\label{thm:EFX_RM_NP-complete}
\end{restatable}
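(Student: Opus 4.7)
For membership in $\NP$, a candidate allocation $A$ serves as a polynomial-size certificate. By \Cref{prop:efx_property}, checking \EFX{} reduces to computing the envy relations under the lexicographic extension (a pairwise check on top differing goods) and verifying that every envied agent holds exactly one good; rank-maximality is verified by first computing the canonical signature in polynomial time (assign each good to any agent who ranks it highest and read off the resulting tuple) and then comparing it with the signature of $A$.

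For $\NP$-hardness, the plan is to reduce from a natural matching-style $\NP$-hard problem, such as $\ThreeDMfull$ or exact cover by $3$-sets. Each element of the source ground set is mapped to an \emph{element agent} equipped with a unique private good that only this agent ranks high enough to be forced onto it under $\RM{}$. Each triple (or $3$-set) $c$ of the source instance is represented by a dedicated \emph{choice good} $g_c$ whose set of top-rankers is precisely the element agents corresponding to the three elements covered by $c$, so that every rank-maximal allocation must send each choice good to one of these three agents. Auxiliary dummy goods and one or more \emph{sink} agents are used to pad the signature and absorb leftover items without disturbing the relevant envy relations.

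The engine of the reduction is \Cref{prop:efx_property}: \EFX{} forbids any envied agent from holding more than one good. The gadgets will be designed so that an element agent receiving two distinct choice goods is necessarily envied by some other agent in the construction, whereas an element agent holding its private good together with at most one choice good remains unenvied. Combined with the $\RM{}$ restriction that each choice good lands on one of its three element agents, this forces each element to be ``covered'' at most once, which is exactly the disjointness condition required by $3$-dimensional matching / exact cover. Conversely, a solution to the source instance directly yields an \EFX{}+\RM{} allocation, by giving every element agent its private good plus the unique choice good assigned by the cover, and sending unused choice goods to the sink in a way compatible with the canonical signature.

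The main obstacle is calibrating the preference profile so that envy is triggered exactly by the forbidden configurations. Since lexicographic envy is governed by the top distinguishing good, an element agent must be engineered to envy any neighboring element agent that accumulates more than one choice good, yet not envy one that collects merely its private good plus a single choice good. This will require carefully interleaving private, choice, and padding goods in each agent's ranking, and simultaneously arranging that the canonical $\RM{}$ signature is realizable only by allocations that conform to the intended element-to-triple assignment pattern. Once these envy and signature invariants are pinned down, both directions of the correctness proof reduce to routine case analysis.
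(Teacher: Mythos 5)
Your \NP{}-membership argument is fine and matches the paper's. The hardness part, however, is only a plan, and the plan has a structural flaw that would prevent it from working as described. In your gadget, each \emph{choice good} $g_c$ is forced by rank-maximality onto one of the three element agents of the triple $c$, and \EFX{} (via \Cref{prop:efx_property}) is supposed to cap each element agent at one choice good. But the combinatorial problem this encodes---assign each triple-good to one of its three element agents so that no element agent receives more than one---is a bipartite matching / system-of-distinct-representatives feasibility question between triples and elements, which is solvable in polynomial time. It does not capture the hard core of $\ThreeDMfull$ or exact cover, which is about \emph{selecting} a subfamily of $q$ pairwise-disjoint triples that covers every element; your orientation (triples as goods, elements as agents) loses the selection structure entirely. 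Adding a sink that can absorb ``unused'' choice goods makes things worse: under strict preferences, rank-maximality either forbids the sink from taking a choice good (if it ranks it lower than the element agents do) or lets every choice good escape to the sink (making the constraint vacuous).

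There is a second, more local problem with the envy calibration you defer to ``routine case analysis.'' Under lexicographic preferences, whether $j$ envies $i$ is determined by the single most-preferred good in $A_i \cup A_j$ according to $\succ_j$; it does not depend on how many goods $i$ holds. So you cannot arrange that an element agent is envied precisely when it holds \emph{two} choice goods: if some agent already prefers one choice good in $A_i$ to its whole bundle, then $i$ is envied and \Cref{prop:efx_property} caps $i$ at one good \emph{total}, so $i$ could not even hold its private good plus a single choice good. The paper's reduction (from $\TTTSAT{}$) resolves exactly this tension with a different mechanism: a \emph{pair} of dummy agents competes for one top dummy good $T_i$, so exactly one of them is forced onto a bottom good $B_i$ and thereby envies the corresponding literal agent no matter what that agent holds; \EFX{} then forces that literal agent down to a single good (its signature good), i.e., it can take no clause goods. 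The free Boolean choice of which dummy agent wins $T_i$ is what encodes the truth assignment, and clause goods are forced by rank-maximality onto literal agents of the clause. Some device of this kind---an enforced binary selection whose losing branch pins an agent to exactly one good---is the missing idea in your sketch.
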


\begin{proof} 
Membership in \NP{} follows from the fact that both \EFX{} and rank-maximality can be checked in polynomial time. To prove \NPH{}ness, we will show a reduction from a restricted version of \ThreeSAT{} called \TTTSAT{}, which is known to be \NPC{}~\citep{AD19sat}. An instance of \TTTSAT{} consists of a collection of $r$ variables $X_1,\dots,X_r$ and $s$ clauses $C_1,\dots,C_s$, where each clause is specified as a disjunction of three literals, and each variable occurs in exactly four clauses, twice negated and twice non-negated. The goal is to determine if there is a truth assignment that satisfies all clauses.

\emph{Construction of the reduced instance}: We will construct a fair division instance with $n = 4r$ agents and $m=4r+s$ goods. The set of agents consists of $2r$ \emph{literal} agents $\{x_i,\overline{x}_i\}_{i \in [r]}$, and $2r$ \emph{dummy} agents $\{d_i,\overline{d}_i\}_{i \in [r]}$. The set of goods consists of $2r$ \emph{signature} goods $\{S_i,\overline{S}_i\}_{i \in [r]}$, $s$ \emph{clause} goods $\{C_j\}_{j \in [s]}$, and $2r$ \emph{dummy} goods $\{T_i,B_i\}_{i \in [r]}$; here $T_i$ and $B_i$ denote the \emph{top} and the \emph{bottom} dummy goods associated with the variable $X_i$, respectively.

\begin{table}[ht]
\renewcommand{\arraystretch}{1.35}
\centering
    \begin{tabular}{|cl|}
    \hline
    $\vartriangleright$: & $S_1 \succ \overline{S}_1 \succ \dots \succ S_r \succ \overline{S}_r \succ T_1 \succ \dots \succ T_r$ \hfill $\succ C_1 \succ \dots \succ C_s \succ B_1 \succ \dots \succ B_r$ \\
    \hline
    $x_i$: & $S_i \succ \, \vartriangleright_{(j-1)} \, \succ C_j \succ \,  \vartriangleright_{(k-j-1)} \, \succ C_k \succ *$ \\
    \hline
    $\overline{x}_i$: & $\overline{S}_i \succ \, \vartriangleright_{(p-1)} \, \succ C_p \succ \,  \vartriangleright_{(q-p-1)} \, \succ C_q \succ *$ \\
    \hline
    $d_i$: & $T_i \succ S_i \succ B_i \succ * ~~~ \text{ and } ~~~ \overline{d}_i: T_i \succ \overline{S}_i \succ B_i \succ *.$ \\
    \hline
    \end{tabular}
    \caption{Preferences of agents in the proof of \Cref{thm:EFX_RM_NP-complete}.}
    \label{tab:EFX_RM_NP-complete}
\end{table}

\emph{Preferences}: \Cref{tab:EFX_RM_NP-complete} shows the preferences of the agents. Let $\vartriangleright$ define a \emph{reference} ordering on the set of goods. For every $i \in [r]$, if $C_{j}$ and $C_{k}$ denote the two clauses containing the positive literal $x_i$, then the literal agent $x_i$ ranks $S_i$ at the top, and the clause goods $C_j$ and $C_k$ at ranks $j+1$ and $k+1$, respectively. The missing positions consist of remaining goods ranked according to $\vartriangleright$ (we write $\vartriangleright_{\ell}$ to denote the top $\ell$ goods in $\vartriangleright$ that have not been ranked so far). The symbol $*$ indicates rest of the goods ordered according to $\vartriangleright$. The preferences of the (negative) literal agent $\overline{x}_i$ and the dummy agents $d_i$, $\overline{d}_i$ are defined similarly as shown in \Cref{tab:EFX_RM_NP-complete}. This completes the construction of the reduced instance.

Note that for any fixed $i \in [r]$, the signature good $S_i$ (or $\overline{S}_i$) is ranked at the top position by the literal agent $x_i$ (or $\overline{x}_i$), and at a lower position by all other agents. Therefore, any rank-maximal allocation must assign $S_i$ to $x_i$ and $\overline{S}_i$ to $\overline{x}_i$. For a similar reason, a rank-maximal allocation must assign the clause good $C_j$ to a literal agent corresponding to a literal contained in the clause $C_j$, and the dummy goods $T_i,B_i$ to the dummy agents $d_i,\overline{d}_i$. The aforementioned \emph{necessary} conditions for rank-maximality are also \emph{sufficient} since each clause good $C_j$ is ranked at the same position by all literal agents corresponding to the literals contained in clause $C_j$, and the goods $T_i$ and $B_i$ are ranked identically by $d_i$ and $\overline{d}_i$.

We will now argue the equivalence of solutions.

($\Rightarrow$) Given a satisfying truth assignment, the desired allocation, say $A$, can be constructed as follows: For every $i \in [r]$, assign the signature goods $S_i$ and $\overline{S}_i$ to the literal agents $x_i$ and $\overline{x}_i$, respectively. If $x_i = 1$, then assign $T_i$ to $d_i$ and $B_i$ to $\overline{d}_i$, otherwise, if $x_i = 0$, then assign $T_i$ to $\overline{d}_i$ and $B_i$ to $d_i$. For every $j \in [s]$, the clause good $C_j$ is assigned to a literal agent $x_i$ (or $\overline{x}_i$) if the literal $x_i$ (or $\overline{x}_i$) is contained in the clause $C_j$ and the clause is satisfied by the literal, i.e., $x_i = 1$ (or $\overline{x}_i = 1$). Note that under a satisfying assignment, each clause must have at least one such literal.

Observe that allocation $A$ satisfies the aforementioned sufficient condition for rank-maximality. Furthermore, any envied agent in $A$ receives exactly one good; in particular, if $d_i$ receives a bottom dummy good $B_i$, then we have $x_i = 0$ in which case the literal agent $x_i$, who is envied by $d_i$, does not receive any clause goods. By \Cref{prop:efx_property}, $A$ is \EFX{}.

($\Leftarrow$) Now suppose there exists an \EFX{} and rank-maximal allocation $A$. Then, $A$ must satisfy the aforementioned necessary condition for rank-maximality. That is, for every $i \in [r]$, the signature goods $S_i$ and $\overline{S}_i$ are assigned to the literal agents $x_i$ and $\overline{x}_i$, respectively (i.e., $S_i \in A_{x_i}$ and $\overline{S}_i \in A_{\overline{x}_i}$), and the dummy goods $T_i$ and $B_i$ are allocated between the dummy agents $d_i$ and $\overline{d}_i$ (i.e., $\{T_i,B_i\} \subseteq A_{d_i} \cup A_{\overline{d}_i}$). In addition, for every $j \in [s]$, the clause good $C_j$ is assigned to a literal agent $x_i$ (or $\overline{x}_i$) such that the literal $x_i$ (or $\overline{x}_i$) is contained in the clause $C_j$. Also, by \Cref{prop:efx_property}, each dummy agent must get exactly one dummy good.

We will construct a truth assignment for the \TTTSAT{} instance as follows: For every $i \in [r]$, if $T_i \in A_{d_i}$, then set $x_i = 1$, otherwise set $x_i = 0$. Note that the assignment is feasible as no literal is assigned conflicting values. To see why this is a satisfying assignment, consider any clause $C_j$. Suppose the clause good $C_j$ is assigned to a literal agent $x_i$ (an analogous argument works when $\overline{x}_i$ gets $C_j$). Then, due to rank-maximality, we know that the literal $x_i$ must be contained in the clause $C_j$. Furthermore, since agent $x_i$ gets more than one good ($S_i,C_j \in A_{x_i}$), it cannot be envied under $A$ (\Cref{prop:efx_property}). Thus, the dummy agent $d_i$ must get the top good $T_i$. Recall that in this case we set $x_i=1$. Since clause $C_j$ contains $x_i$, it must be satisfied, as desired.
\end{proof}

The intractability in \Cref{thm:EFX_RM_NP-complete} persists even when we relax the fairness requirement from \EFX{} to \EF{k}.

\begin{restatable}[]{thm}{EFkRM}
For any fixed $k \in \mathbb{N}$, determining the existence of an \EF{k} and rank-maximal allocation is \NPC{}.
\label{thm:EFk_RM_NP-complete}
\end{restatable}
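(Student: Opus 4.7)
Membership in $\NP{}$ is immediate since both $\EF{k}$ and rank-maximality can be verified in polynomial time, so the plan focuses on $\NPH{}$ness by adapting the reduction from $\TTTSAT{}$ used in the proof of \Cref{thm:EFX_RM_NP-complete}. That reduction produces an $\EFX{}$-violation by forcing a literal agent to hold two goods---a signature $S_i$ and a clause good $C_j$---in the unsatisfiable case, exploiting \Cref{prop:efx_property}. The natural analog for $\EF{k}$ under lexicographic preferences is that an envied agent $h$ can have at most $k$ goods ranked above the envier $i$'s top good without violating the notion; the plan is therefore to enlarge the ``bad-case'' bundle to contain at least $k+1$ such threatening goods, by inserting auxiliary padding into each variable gadget.

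Concretely, I would augment the reduction with $k-1$ auxiliary goods $A_i^1,\ldots,A_i^{k-1}$ per positive literal agent $x_i$ (and symmetrically $\overline{A}_i^1,\ldots,\overline{A}_i^{k-1}$ for $\overline{x}_i$). Each $A_i^l$ is ranked at position $l+1$ by $x_i$ (just after $S_i$) and at position $l+2$ by $d_i$ (just after $S_i$ but above $B_i$), so that rank-maximality uniquely assigns each $A_i^l$ to $x_i$ while simultaneously making the auxiliary goods threatening from $d_i$'s perspective. The preference of $d_i$ is further reorganized so that the two clauses containing literal $x_i$ appear just above $B_i$ (but below the $A_i^l$'s), with their positions in $d_i$'s ranking kept strictly worse than the positions of the same clauses in any literal agent appearing in them. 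Under any rank-maximal allocation, $x_i$ then receives $\{S_i, A_i^1, \ldots, A_i^{k-1}\}$---exactly $k$ goods that $d_i$ ranks above $B_i$---plus possibly one or two clause goods, each of which becomes an additional threatening good precisely when $d_i$ holds $B_i$.

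The correctness argument then mirrors \Cref{thm:EFX_RM_NP-complete}. In the forward direction, a satisfying assignment permits routing each clause good to a TRUE literal whose paired dummy holds $T_i$, so that dummy never envies; every ``at-risk'' dummy (one holding $B_i$) then sees its paired literal with exactly $k$ threatening goods, and $\EF{k}$ is achieved by removing all of them. In the reverse direction, in an unsatisfiable instance at least one clause good must land on a literal $x_i$ whose variable is FALSE (so $d_i$ holds $B_i$), and the resulting bundle of $k+1$ threatening goods cannot be patched by removing any $k$, violating $\EF{k}$. The equivalence of a satisfying assignment and an $\EF{k}$+$\RM{}$ allocation therefore follows.

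The main technical obstacle is engineering the precise ranking positions so that all rank-maximal allocations share the intended signature: each clause good $C_j$ must be forced onto a literal agent in clause $C_j$ (and not onto a dummy or onto a literal outside the clause) despite the added auxiliary goods and the elevated positions of clauses in the dummy preferences. A careful rank-by-rank comparison, analogous to the argument implicit in \Cref{thm:EFX_RM_NP-complete}, should establish that literal agents in a clause strictly out-rank every dummy for that clause's good, and that the ``intended'' structure is the unique rank-maximal one up to the choice of truth assignment and of clause-good routing---at which point the reduction, and hence $\NPH{}$ness of $\EF{k}$+$\RM{}$, follows.
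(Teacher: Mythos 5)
Your overall strategy differs from the paper's: the paper proves \Cref{thm:EFk_RM_NP-complete} by a fresh reduction from \PITFull{} on balanced tripartite graphs (each main agent is padded with $k-1$ ``selector'' goods and must collect exactly three vertex-goods forming a triangle; each \emph{non-edge} spawns a large group of dummy agents whose single dummy good is ranked below the two endpoint goods and all selector goods, so that a main agent holding both endpoints of a non-edge accumulates $k+1$ threatening goods). The paper explicitly notes that its \EF{k} proof ``differs considerably'' from the \TTTSAT{} reduction for \Cref{thm:EFX_RM_NP-complete}, and your attempt to instead patch that SAT reduction runs into exactly the difficulty that motivated the change.

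The gap is the one you flag at the end but do not resolve, and it is not a routine verification. For a clause good $C_j$ assigned to a false literal agent $x_i$ to count as a threatening good for the dummy $d_i$ holding $B_i$, you must place $C_j$ \emph{above} $B_i$ in $d_i$'s ranking; but $B_i$ sits near the top of $d_i$'s list (position roughly $k+3$ in your construction), while the literal agents in clause $C_j$ rank $C_j$ at position roughly $j+k$, which for most $j$ is far lower. Rank-maximality assigns each good to an agent realizing its highest rank, so as described the clause goods would be routed to the dummy agents rather than to the literal agents, destroying the reduction. The natural repair---pushing $B_i$ (and the two relevant clause goods) down $d_i$'s list until they sit below position $j+k$---creates a second problem: everything newly placed above $B_i$ (other signature goods, other variables' auxiliary goods, other clause goods) becomes threatening to $d_i$, and since every literal agent $x_{i'}$ already holds $k$ goods ($S_{i'}$ plus $k-1$ auxiliaries) before receiving any clause good, it takes only one or two of these foreign goods landing above $B_i$ in $d_i$'s order to produce spurious cross-gadget \EF{k} violations that break the forward direction even for satisfying assignments. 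Until you exhibit concrete rank positions that simultaneously (i) keep every clause good rank-maximally routable only to its clause's literal agents, (ii) place those clause goods above $B_i$ for the paired dummy, and (iii) keep every other agent's bundle containing at most $k$ goods that $d_i$ ranks above $B_i$, the reduction is not established. The paper's \PIT{}-based gadget is engineered precisely so that the threatening goods seen by a dummy are only the two endpoints of its associated non-edge plus selector goods, which sidesteps this tension entirely.
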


We note that the proof of \Cref{thm:EFk_RM_NP-complete} (see \Cref{sec:app:thm:EFk_RM_NP-complete} of the appendix) differs considerably from that of \Cref{thm:EFX_RM_NP-complete} as neither result is an immediate consequence of the other. Indeed, a YES instance of \EFX{}+\RM{} is also a YES instance of \EF{k}+\RM{}, but the same is not true for a NO instance.

A corollary of \Cref{thm:EFk_RM_NP-complete} is that checking the existence of \EF{1}+\RM{} allocations for \emph{additive valuations} is also \NPC{}.\footnote{An additive valuations instance in which agent $i$ values its $j^\textup{th}$ favorite good at $2^{m-j+1}$ is equivalent to the lexicographic instance.} For this setting, \citet{AHM+19constrained} have shown \NPC{}ness even for three agents. By contrast, we will show that for lexicographic preferences, the problem is efficiently solvable when $n=3$ (\Cref{prop:EF1_RM_ThreeAgents}). The proof of this result is deferred to \Cref{sec:app:prop:EF1_RM_ThreeAgents} of the appendix.

\begin{restatable}[]{prop}{EFoneRMThreeAgents}
There is a polynomial-time algorithm that, given as input a lexicographic instance with three agents, computes an \EF{1} and rank-maximal allocation, whenever one exists.
\label{prop:EF1_RM_ThreeAgents}
\end{restatable}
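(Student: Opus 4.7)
The plan is to exploit a structural fact specific to the three-agent setting. Under lexicographic preferences, \EF{1} of agent $i$ toward another agent $h$ holds if and only if $|A_h| \leq 1$, or $A_i \neq \emptyset$ and at most one good in $A_h$ is ranked above $a_i$ by $\succ_i$, where $a_i$ denotes agent $i$'s most preferred good inside its own bundle $A_i$ (call this the \emph{anchor}). Letting $T_i := \{g \in M : g \succ_i a_i\}$, which is necessarily contained in the two other agents' bundles, \EF{1} forces $|T_i \cap A_j| \leq 1$ and $|T_i \cap A_k| \leq 1$ for the two other agents $j, k$, hence $|T_i| \leq 2$. Thus $a_i$ must be among $i$'s top three goods overall---so in any \EF{1}+\RM{} allocation, the anchor of each agent lies in a constant-size set.

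The algorithm first enumerates all $O(1)$ candidate anchor triples $(a_1, a_2, a_3)$ of distinct goods, with each $a_i$ in agent $i$'s top three and $i \in R(a_i)$, where $R(g)$ is the set of agents ranking $g$ highest among all agents (a necessary condition for rank-maximality). For each such triple, I would let $T := T_1 \cup T_2 \cup T_3$ (of size at most six) and brute-force over all $O(1)$ distributions of $T \setminus \{a_1,a_2,a_3\}$ among the agents, subject to: (i) rank-maximality (each $g$ is assigned to some agent in $R(g)$); (ii) anchor-preservation (no $g \in T_i$ is assigned to $i$, so $a_i$ remains the top good of $A_i$); and (iii) the pairwise \EF{1} bound $|A_h \cap T_i| \leq 1$ for every pair (counting the anchors already in place). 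Any good outside $T$ lies strictly below every agent's anchor from that agent's view, so it can be greedily assigned to any $i \in R(g)$: the anchors stay on top of their bundles and the counts $|A_h \cap T_i|$ are untouched, so both \EF{1} and \RM{} are preserved. The algorithm returns the first feasible allocation found, or reports infeasibility if no branch succeeds.

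The main obstacle will be proving the converse---that every \EF{1}+\RM{} allocation is recovered by some branch of the enumeration. The structural lemma above pins each anchor into its agent's top three, so the true anchor triple of any target allocation lies in the search space, and the corresponding distribution of $T$ is then read directly off the target. Some care is needed for degenerate cases with $m \leq 3$, where an \EF{1} allocation with an empty bundle may exist (for $m \geq 4$, pigeonhole together with \EF{1} rules this out); but in this regime a direct enumeration of all $3^m$ allocations is trivially feasible. The overall runtime is polynomial, dominated by the linear-time greedy completion stage, since both the anchor and the $T$-distribution enumerations are of constant size.
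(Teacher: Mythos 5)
Your proposal is correct, but it reaches the result by a genuinely different route than the paper. The paper's proof is a short, three-agent-specific case analysis: if the three top-ranked goods are distinct, an envy-free and rank-maximal allocation exists by \Cref{prop:EF_RM_Polytime}; otherwise two agents contest a common top good, the algorithm first performs all assignments forced by rank-maximality, then branches on which of the two contested agents receives the shared good, gives the other its highest feasible good, tests \EF{1} on the resulting partial allocation, and extends rank-maximally. Your argument instead isolates a structural lemma---for disjoint bundles under lexicographic preferences, $i$ is \EF{1} toward $h$ iff at most one good of $A_h$ is ranked above $i$'s best own good, so with three agents each ``anchor'' has at most two goods above it and must lie in that agent's top three---and then runs a constant-size enumeration over anchor triples and over the placement of the at most six goods above some anchor, finishing with a greedy rank-maximal completion that provably preserves both properties because goods outside $T$ neither displace an anchor nor enter any set $T_i$. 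I checked the pivotal equivalence, the soundness and completeness of the enumeration, and the degenerate $m\le 3$ case; all are sound. What your route buys is a more self-contained correctness argument (the paper's sketch leaves implicit why its two branches suffice and why the rank-maximal extension cannot break \EF{1}, facts your completion lemma makes explicit) and an immediate generalization to any \emph{fixed} number of agents $n$ (anchors in the top $n$, at most $n(n-1)$ goods above anchors, still $O(1)$ branches), which does not conflict with \Cref{thm:EFk_RM_NP-complete} since that reduction needs $n$ to grow. The paper's route is simply shorter for $n=3$.
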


Another avenue for circumventing the intractability in \Cref{thm:EFX_RM_NP-complete} is provided by \emph{maximin share guarantee} (\MMS{}). For additive valuations, \EFX{} and \MMS{} are incomparable notions~\citep{ABM18comparing}. However, for lexicographic preferences, \MMS{} is strictly weaker than \EFX{} (see \Cref{prop:EFX_implies_MMS} in \Cref{sec:app:prop:EFX_implies_MMS} of the appendix). This relaxation of \EFX{} turns out to be computationally useful, as the existence of an \MMS{} and rank-maximal allocations can be checked in polynomial time.

\begin{restatable}[]{thm}{MMSRMGoods}
There is a polynomial-time algorithm that, given as input a lexicographic instance, computes an \MMS{} and rank-maximal allocation, whenever one exists.
\label{thm:MMS_RM_Goods}
\end{restatable}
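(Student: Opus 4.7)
The plan is to leverage a clean structural description of MMS bundles under lexicographic preferences and reduce $\MMS{}+\RM{}$ to a bounded number of bipartite matching computations.

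First, I would show that for $m \geq n$, agent $i$'s maximin share is exactly $\MMS_i = \{g^{(i)}_n, g^{(i)}_{n+1}, \dots, g^{(i)}_m\}$, where $g^{(i)}_k$ denotes $i$'s $k$-th favorite good; this is realized by the ``evenizing'' partition that puts each of $i$'s top $n-1$ favorites in its own singleton bundle and lumps the remaining goods with $g^{(i)}_n$. Lex preferences then give the following dichotomy: $A_i \succeq_i \MMS_i$ iff either (a) $A_i$ contains one of $i$'s top $n-1$ goods, or (b) $A_i = \MMS_i$ exactly (the ``edge'' case). I would combine this with the characterization (implicit in the paper's discussion of \RM{}) that $A$ is \RM{} iff every good $g$ is assigned to one of its eligible agents $E(g) := \{i : i \text{ ranks } g \text{ at the highest position among all agents}\}$.

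The crucial combinatorial step is bounding the number of edge agents. For $m > n$, at most one agent can be in case (b): if $k \geq 2$ agents had $A_{i_\ell} = \MMS_{i_\ell}$, then the pairwise disjointness of their bundles together with the $n - k$ remaining agents each still needing at least one good would force $k(m - n + 1) + (n - k) \leq m$, which simplifies to $m \leq n$, a contradiction. For $m = n$, on the other hand, $\MMS_i$ is a singleton and any per-agent single-good allocation automatically satisfies MMS, so the problem collapses to a single bipartite matching check on the eligibility graph.

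Given this, the algorithm is: (i) if $m < n$, output any RM allocation (MMS is vacuous); (ii) if $m = n$, find a perfect matching in the bipartite graph with edges $(i, g)$ iff $i \in E(g)$; (iii) if $m > n$, first search for a matching saturating every agent in the bipartite graph $H$ with edges $(i, g)$ iff $g$ is one of $i$'s top $n-1$ goods and $i \in E(g)$, extending with leftover goods assigned to arbitrary eligible agents; if this fails, iterate over each agent $i$ as the candidate edge agent, check that $i \in E(g)$ for every $g \in \MMS_i$, and solve the residual matching for the remaining agents against $\{g^{(i)}_1, \dots, g^{(i)}_{n-1}\}$ under the analogous ``top $n-1$ and eligible'' edges. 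Each configuration is a single bipartite matching and at most $n+1$ are tried, yielding a polynomial-time algorithm. The main obstacle is establishing the at-most-one-edge-agent structural bound and verifying that the residual matching exactly captures every feasible non-edge assignment of the remaining goods; once these are in place, correctness in both directions follows directly from the MMS and \RM{} characterizations.
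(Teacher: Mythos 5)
Your proposal is correct and follows the same overall skeleton as the paper's proof: the singleton-based characterization of $\MMS_i$ as the bottom $m-n+1$ goods, the ``every good goes to an agent who ranks it highest'' characterization of rank-maximality, and a first bipartite matching on edges $(i,g)$ with $g$ among $i$'s top $n-1$ goods and rank-maximally assignable to $i$. Where you genuinely diverge is in the fallback when that matching fails. The paper argues that the bottom $m-n+1$ goods of any agent that can receive them rank-maximally must occupy the last $m-n+1$ positions of \emph{every} agent's ranking, collapses them into a single ``meta'' good, and solves one more $n\times n$ matching. You instead prove a counting lemma --- for $m>n$ at most one agent can receive exactly its bottom bundle, since $k\geq 2$ such agents together with nonempty bundles for the remaining $n-k$ agents would force $k(m-n+1)+(n-k)\leq m$, i.e.\ $(k-1)(m-n)\leq 0$ --- and then enumerate the at most $n$ candidates for that unique ``edge'' agent, solving a residual $(n-1)\times(n-1)$ matching for each. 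Both give polynomial time; your route costs $n$ extra matchings instead of one, but it is self-contained and sidesteps the paper's more delicate intermediate claims (e.g.\ that all agents share the same unordered top-$(n-1)$ set once the first matching fails). One hypothesis you should state explicitly when writing this up: an empty bundle violates \MMS{} whenever $m\geq n$, which is exactly what forces the $n-k$ non-edge agents to need a good in your counting lemma and what makes the residual assignment a perfect matching (each remaining agent gets exactly one of $i$'s top $n-1$ goods, which must lie in its own top $n-1$ and be eligible).
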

\begin{proof}
Fix any agent $i \in N$, and suppose its preference is given by $\>_i \coloneqq g_1 \, \> \, g_2 \, \> \dots \> \, g_m$. Under lexicographic preferences, the \MMS{} partition of agent $i \in N$ is uniquely defined as
$$\left\{\{g_1\},\{g_2\},\dots,\{g_{n-1}\},\{g_n,g_{n+1},\dots,g_m\}\right\}.$$
 This observation gives a characterization of \MMS{} allocations: An allocation is \MMS{} if and only if each agent either receives one or more of its top-$(n-1)$ goods, or it receives all of its bottom-$(m-n+1)$ goods.

Construct a bipartite graph $G = (N \cup M, E)$ between agents and goods where an edge $(i,j) \in E$ exists if agent $i$ ranks good $j$ within its top-$(n-1)$ goods, and good $j$ can be `rank-maximally assigned' to agent $i$ (in other words, agent $i$ ranks good $j$ at least as high as any other agent).

If $G$ admits a perfect matching (this can be checked in polynomial time), then, by the above characterization, we have a partial allocation that is \MMS{} and rank-maximal. By assigning the unmatched goods in a rank-maximal manner, we obtain a desired complete allocation.

Thus, for the rest of the proof, let us assume that $G$ does not have a perfect matching. Note that in this case, the unordered set of top-$(n-1)$ goods of each agent is the same. Then, either there is no \MMS{}+rank-maximal allocation, or if there exists one, then it must be that some agent gets \emph{all} of its bottom-$(m-n+1)$ goods. The latter condition can be checked in polynomial time as follows (note that this would establish the desired polynomial running time of our algorithm): First, we check for each agent whether it can be assigned its bottom-$(m-n+1)$ goods in a rank-maximal manner; let $S$ denote the set of all agents who satisfy this condition. If $S$ is empty, then we can return NO. Otherwise, we move to the next step. 

Fix an arbitrary agent $i \in S$, and let $g_n,g_{n+1},...,g_m$ denote its bottom-$(m-n+1)$ goods. Since each of $g_n,g_{n+1},...,g_m$ is rank-maximal for agent $i$, it must be that \emph{every} agent ranks these goods in the \emph{exact same way}, i.e., between the positions $n$ and $m$. The reason is that since the good $g_m$ can be rank-maximally assigned to agent $i$, it must be ranked last by \emph{every} agent. Subject to that, the good $g_{m-1}$ must be ranked at $m-1$ by every agent, and so on. In this case, we can ``bundle up'' the bottom $m-n+1$ goods into a single ``meta'' good, and treat the new instance as one with $n$ agent and $n$ goods, where the meta good is ranked last by everybody. Then, the old instance admits an \MMS{} allocation if and only if the new instance admits one, which, in turn, happens if and only if the bipartite graph of the new instance admits a perfect matching.
\end{proof}

\section{Experiments}
We now revisit the non-existence result in \Cref{eg:EFk+RM_NonExistence} by examining how frequently fair (i.e., \EF{}, \EFX{}, \EF{1}, \MMS{}) and efficient (i.e., rank-maximal) allocations exist in synthetically generated data. To that end, we consider a fixed number of agents ($n=5$) whose preferences over a set of $m$ goods (where $m \in \{5,\dots,100\}$) are generated using the Mallows model~\citep{Mallows1957}. Given a reference ranking $\>^* \in \L$ and a dispersion parameter $\phi \in [0,1]$, the probability of generating a ranking $\>_i \in \L$ under the Mallows model is given by $\frac{1}{Z} \phi^{\texttt{d}(\>^*,\>_i)}$, where $Z$ is a normalization constant and $\texttt{d}(\cdot)$ is the Kendall's Tau distance. Thus, $\phi=0$ induces identical preferences (i.e., $\>_i=\>^*$) while $\phi=1$ is the uniform distribution. For each combination of $m$, $n$, and $\phi\in\{0,0.25,0.5,0.75,1\}$, we sample $1000$ preference profiles, and use an integer linear program to check the existence of $\{\EF{},\EFX{},\EF{1},\MMS{}\}+\RM{}$ allocations. Code and data for all our experiments is available at \url{https://github.com/sujoyksikdar/Envy-Freeness-With-Lexicographic-Preferences}.

\begin{figure*}[ht]
    \centering
        \includegraphics[width=0.49\textwidth]{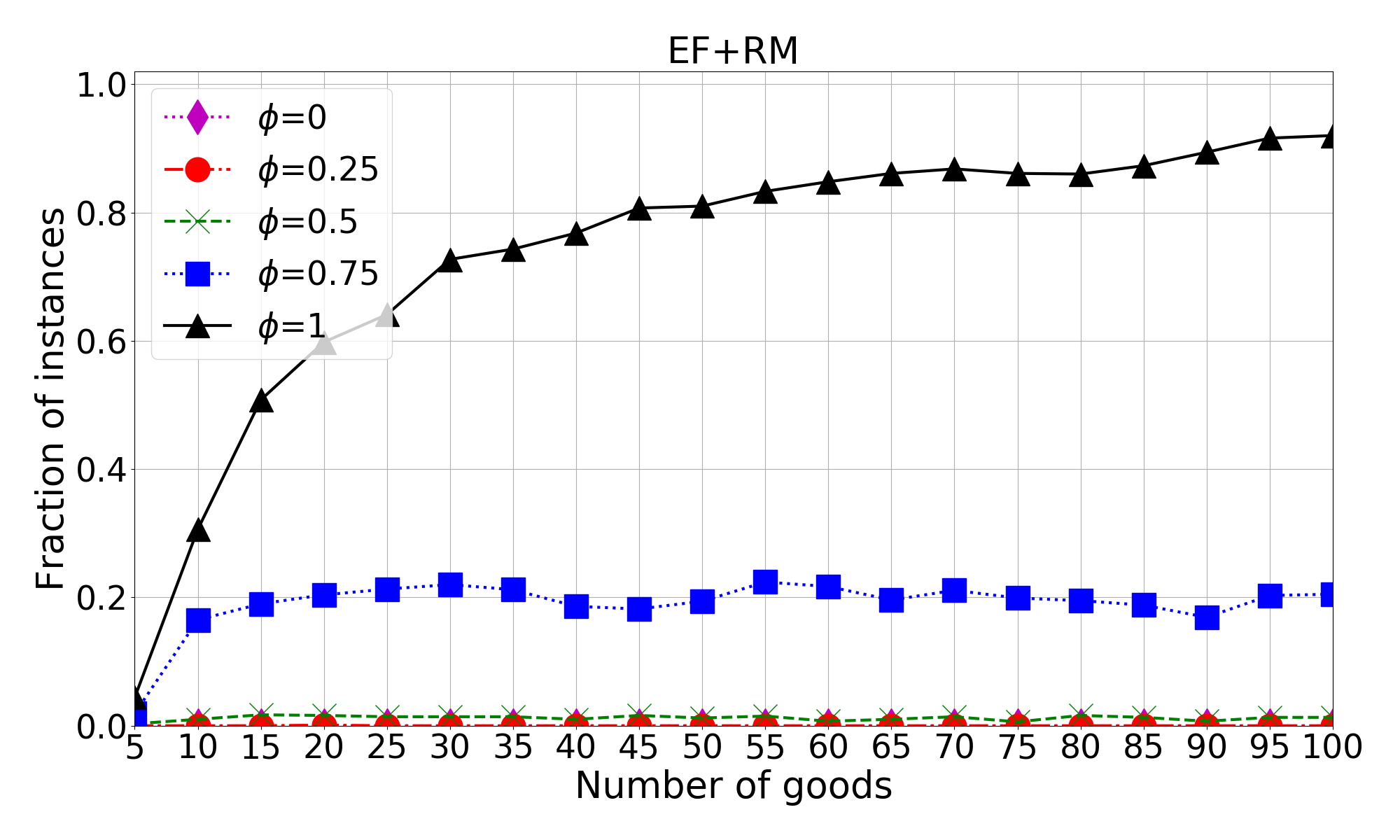}
        \includegraphics[width=0.49\textwidth]{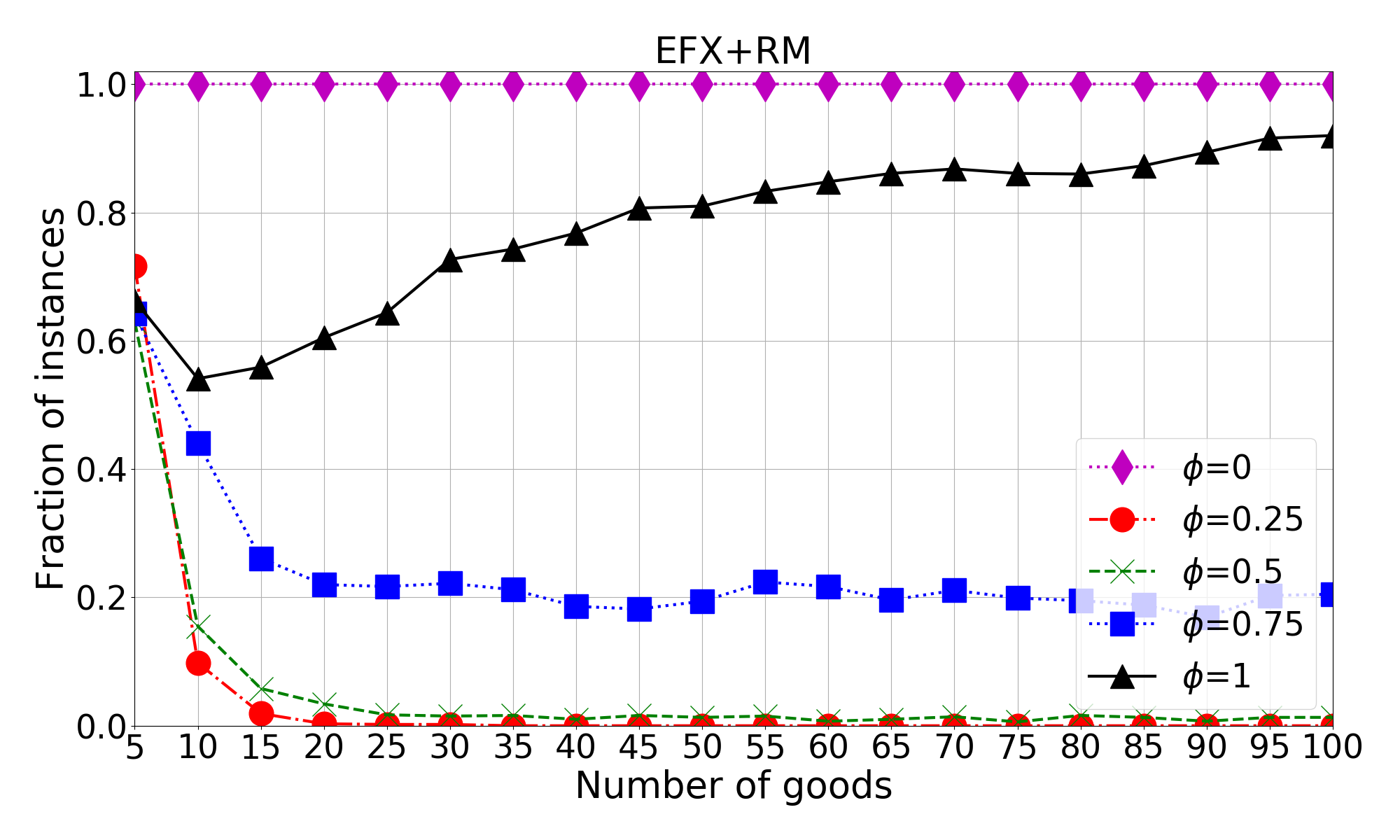}\\
        \includegraphics[width=0.49\textwidth]{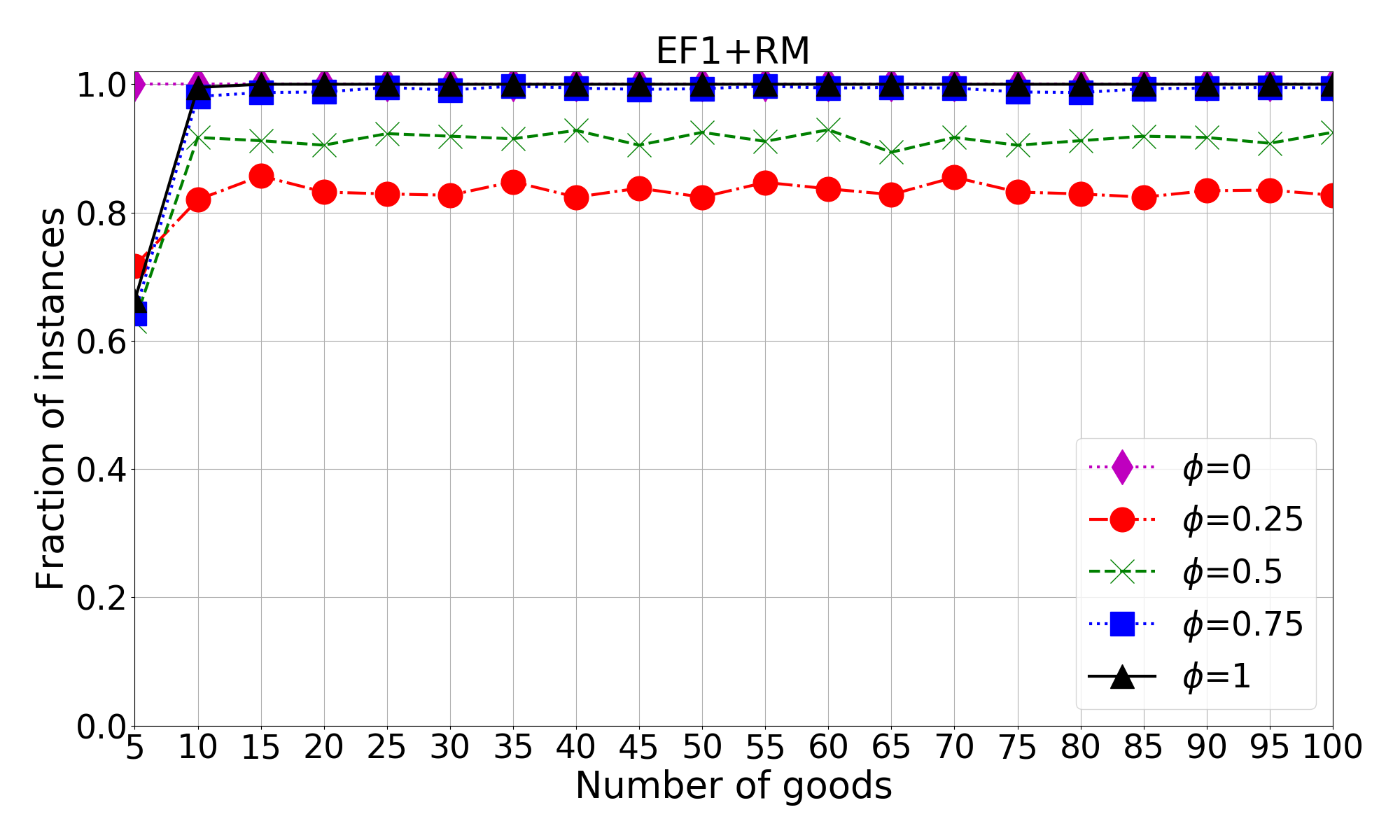}
        \includegraphics[width=0.49\textwidth]{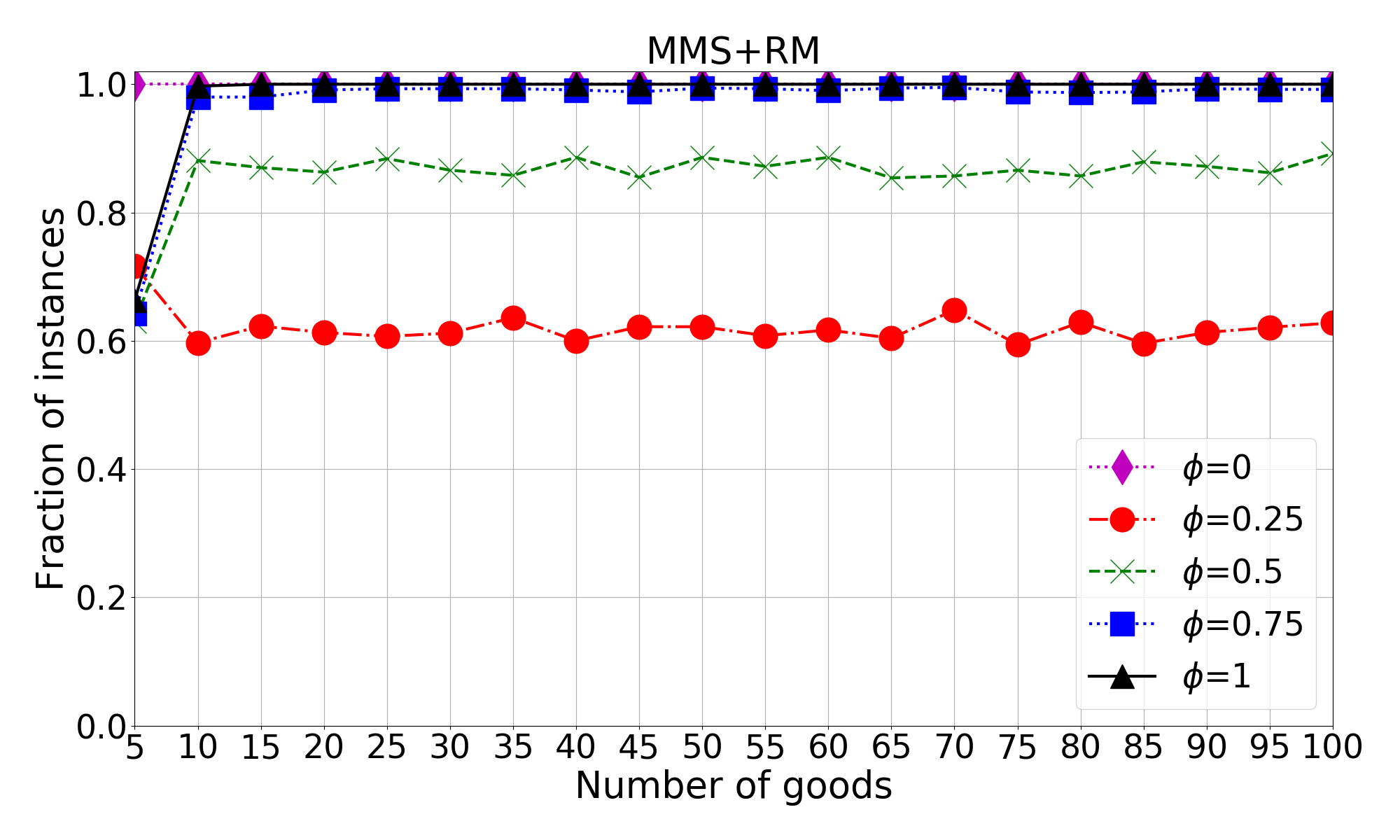}
    \caption{The plots show how the fraction of instances that admit $\{\EF{},\EFX{},\EF{1},\MMS{}\}+\RM{}$ allocations varies with the number of goods. The number of agents is fixed ($n=5$), and their preferences follow the Mallows model with dispersion parameter $\phi$.}
    \label{fig:mallows}
\end{figure*}

\Cref{fig:mallows} presents our experimental results. For identical preferences ($\phi=0$), every complete allocation is Pareto optimal as well as rank-maximal. Therefore, an \EFX{}+\RM{} (and hence \{\EF{1}, \MMS{}\}+\RM{}) allocation always exists in this case, validating our theoretical result in \Cref{thm:EFX_PO}. On the other hand, an \EF{}+\RM{} allocation fails to exist because of the conflict in top-ranked goods. At the other extreme for $\phi=1$ (i.e., the uniform distribution), we note that the probability of existence of \EF{}+\RM{} outcomes grows steadily with $m$. This is because for (exact) envy-freeness, all five rankings should have distinct top goods, the probability of which is $(1-\frac{1}{m}) \cdot (1-\frac{2}{m}) \cdot (1-\frac{3}{m}) \cdot (1-\frac{4}{m})$. For $m=100$, this value is roughly $0.9$, suggesting that in the asymptotic regime, envy-free (and, by extension, envy-free and rank-maximal) allocations are increasingly likely to exist (as \Cref{fig:mallows} shows), and that our algorithm in \Cref{prop:EF_RM_Polytime} will return \EF{}+\RM{} outcomes with high probability.

We observe that for small values of $m$, a significantly greater fraction of instances admit \EFX{}+\RM{} allocations than \EF{}+\RM{} allocations. However, as the number of goods increases, this gap, i.e. the fraction of instances that do not admit \EF{}+\RM{} allocations but do admit \EFX{}+\RM{} allocations, shrinks rapidly. We conjecture that the likelihood that every agent must be allocated more than one good in any \RM{} allocation increases as the number of goods increases. Therefore, it is likely that envied agents receive more than one good, which is in direct conflict with \EFX{}. As we show in \Cref{sec:app:experiments} of the appendix, for relatively small values of $m$, the fraction of instances that admit \EFX{}+\RM{} allocations decreases initially with the number of goods up to a point, after which the difference between the fractions of instances that admit \EFX{}+\RM{} and \EF{}+\RM{} becomes negligible, and we observe an increasing trend both in the fraction of instances that admit \EF{}+\RM{} allocations and those that admit \EFX{}+\RM{} allocations.

We also observe a general trend in our plots that $\{\EF1{},\MMS{}\}+\RM{}$ allocations tend to exist more frequently as the number of goods increases. This together with the similar increasing trend for \EF{}+\RM{} and \EFX{}+\RM{}, suggests that the distributional approach could be a promising avenue for addressing the non-existence result in \Cref{eg:EFk+RM_NonExistence}.

\section{Concluding Remarks}

We studied the interplay of fairness and efficiency under lexicographic preferences, obtaining strong algorithmic characterizations for \EFX{} and Pareto optimality that addressed notable gaps in the additive valuations model, and outlining the computational limits of our approach for the stronger efficiency notion of rank-maximality.

A natural extension to our preference model is including ties or weak orders where lexicographic preferences are over `equivalence classes'. Clearly, the computational intractability and incompatibility results in this paper extend to this larger class.
When preferences can include ties, several intricate axiomatic and technical challenges may arise. An interesting future direction is to start from mechanisms proposed by \citet{krysta2014size} and \citet{bogomolnaia2005strategy} for achieving strategyproof and \PO{} allocations, and investigate the existence and compatibility of \EFX{} along with other desirable properties.

Going forward, it would be interesting to develop distribution-specific algorithms that can compute \EFX{}+\RM{} (or \EF{1}+\RM{}) allocations with, say, a constant probability. Extending our algorithmic characterization results to other fairness notions, specifically \EF{1} and \MMS{}, will also be of interest.

\section*{Acknowledgments}
HH acknowledges support from NSF grant \#1850076. RV acknowledges support from ONR\#N00014-171-2621 while he was affiliated with Rensselaer Polytechnic Institute, and is currently supported by project no. RTI4001 of the Department of Atomic Energy, Government of India. Part of this work was done while RV was supported by the Prof. R Narasimhan postdoctoral award. LX acknowledges support from NSF \#1453542 and \#1716333,  ONR \#N00014-171-2621, and a gift fund from Google. We thank the anonymous reviewers for their very helpful comments and suggestions.

\bibliographystyle{named}
\bibliography{references,ref}

\begin{thebibliography}{}

\bibitem[\protect\citeauthoryear{Ahadi and Dehghan}{2019}]{AD19sat}
Arash Ahadi and Ali Dehghan.
\newblock {(2/2/3)-SAT Problem and its Applications in Dominating Set
  Problems}.
\newblock {\em Discrete Mathematics \& Theoretical Computer Science}, 21(4),
  2019.

\bibitem[\protect\citeauthoryear{Amanatidis \bgroup \em et al.\egroup
  }{2017}]{amanatidis2017truthful}
Georgios Amanatidis, Georgios Birmpas, George Christodoulou, and Evangelos
  Markakis.
\newblock {Truthful Allocation Mechanisms without Payments: Characterization
  and Implications on Fairness}.
\newblock In {\em Proceedings of the 2017 ACM Conference on Economics and
  Computation}, pages 545--562, 2017.

\bibitem[\protect\citeauthoryear{Amanatidis \bgroup \em et al.\egroup
  }{2018}]{ABM18comparing}
Georgios Amanatidis, Georgios Birmpas, and Evangelos Markakis.
\newblock {Comparing Approximate Relaxations of Envy-Freeness}.
\newblock In {\em Proceedings of the 27th International Joint Conference on
  Artificial Intelligence}, pages 42--48, 2018.

\bibitem[\protect\citeauthoryear{Amanatidis \bgroup \em et al.\egroup
  }{2020a}]{ABF+20maximum}
Georgios Amanatidis, Georgios Birmpas, Aris Filos-Ratsikas, Alexandros
  Hollender, and Alexandros~A Voudouris.
\newblock {Maximum Nash Welfare and Other Stories about EFX}.
\newblock In {\em In Proceedings of the 29th International Joint Conference on
  Artificial Intelligence}, pages 24--30, 2020.

\bibitem[\protect\citeauthoryear{Amanatidis \bgroup \em et al.\egroup
  }{2020b}]{AMN20multiple}
Georgios Amanatidis, Evangelos Markakis, and Apostolos Ntokos.
\newblock {Multiple Birds with One Stone: Beating $1/2$ for EFX and GMMS via
  Envy Cycle Elimination}.
\newblock {\em Theoretical Computer Science}, 841:94 -- 109, 2020.

\bibitem[\protect\citeauthoryear{Aziz and
  Kasajima}{2017}]{aziz2017impossibilities}
Haris Aziz and Yoichi Kasajima.
\newblock {Impossibilities for Probabilistic Assignment}.
\newblock {\em Social Choice and Welfare}, 49(2):255--275, 2017.

\bibitem[\protect\citeauthoryear{Aziz \bgroup \em et al.\egroup
  }{2019}]{AHM+19constrained}
Haris Aziz, Xin Huang, Nicholas Mattei, and Erel Segal-Halevi.
\newblock {The Constrained Round Robin Algorithm for Fair and Efficient
  Allocation}.
\newblock {\em arXiv preprint arXiv:1908.00161}, 2019.

\bibitem[\protect\citeauthoryear{Barman \bgroup \em et al.\egroup
  }{2018}]{BKV18finding}
Siddharth Barman, Sanath~Kumar Krishnamurthy, and Rohit Vaish.
\newblock {Finding Fair and Efficient Allocations}.
\newblock In {\em Proceedings of the 2018 ACM Conference on Economics and
  Computation}, pages 557--574, 2018.

\bibitem[\protect\citeauthoryear{Baumeister \bgroup \em et al.\egroup
  }{2017}]{BBL+17positional}
Dorothea Baumeister, Sylvain Bouveret, J{\'e}r{\^o}me Lang, Nhan-Tam Nguyen,
  Trung~Thanh Nguyen, J{\"o}rg Rothe, and Abdallah Saffidine.
\newblock {Positional Scoring-Based Allocation of Indivisible Goods}.
\newblock {\em Autonomous Agents and Multi-Agent Systems}, 31(3):628--655,
  2017.

\bibitem[\protect\citeauthoryear{B{\'e}rczi \bgroup \em et al.\egroup
  }{2020}]{BBB+20envy}
Krist{\'o}f B{\'e}rczi, Erika~R B{\'e}rczi-Kov{\'a}cs, Endre Boros,
  Fekadu~Tolessa Gedefa, Naoyuki Kamiyama, Telikepalli Kavitha, Yusuke
  Kobayashi, and Kazuhisa Makino.
\newblock {Envy-free Relaxations for Goods, Chores, and Mixed Items}.
\newblock {\em arXiv preprint arXiv:2006.04428}, 2020.

\bibitem[\protect\citeauthoryear{Bogomolnaia and
  Moulin}{2001}]{Bogomolnaia01:New}
Anna Bogomolnaia and Herv\'e Moulin.
\newblock {A New Solution to the Random Assignment Problem}.
\newblock {\em Journal of Economic Theory}, 100(2):295--328, 2001.

\bibitem[\protect\citeauthoryear{Bogomolnaia \bgroup \em et al.\egroup
  }{2005}]{bogomolnaia2005strategy}
Anna Bogomolnaia, Rajat Deb, and Lars Ehlers.
\newblock {Strategy-Proof Assignment on the Full Preference Domain}.
\newblock {\em Journal of Economic Theory}, 123(2):161--186, 2005.

\bibitem[\protect\citeauthoryear{Budish}{2011}]{budish2011combinatorial}
Eric Budish.
\newblock {The Combinatorial Assignment Problem: Approximate Competitive
  Equilibrium from Equal Incomes}.
\newblock {\em Journal of Political Economy}, 119(6):1061--1103, 2011.

\bibitem[\protect\citeauthoryear{Caragiannis \bgroup \em et al.\egroup
  }{2019a}]{CGH19envy}
Ioannis Caragiannis, Nick Gravin, and Xin Huang.
\newblock {Envy-Freeness Up to Any Item with High Nash Welfare: The Virtue of
  Donating Items}.
\newblock In {\em Proceedings of the 2019 ACM Conference on Economics and
  Computation}, pages 527--545, 2019.

\bibitem[\protect\citeauthoryear{Caragiannis \bgroup \em et al.\egroup
  }{2019b}]{caragiannis2019unreasonable}
Ioannis Caragiannis, David Kurokawa, Herv{\'e} Moulin, Ariel~D Procaccia,
  Nisarg Shah, and Junxing Wang.
\newblock {The Unreasonable Fairness of Maximum Nash Welfare}.
\newblock {\em ACM Transactions on Economics and Computation}, 7(3):12, 2019.

\bibitem[\protect\citeauthoryear{Chaudhury \bgroup \em et al.\egroup
  }{2020a}]{CGM20efx}
Bhaskar~Ray Chaudhury, Jugal Garg, and Kurt Mehlhorn.
\newblock {EFX Exists for Three Agents}.
\newblock In {\em Proceedings of the Twenty-First ACM Conference on Economics
  and Computation}, pages 1--19, 2020.

\bibitem[\protect\citeauthoryear{Chaudhury \bgroup \em et al.\egroup
  }{2020b}]{CGM20fair}
Bhaskar~Ray Chaudhury, Jugal Garg, and Ruta Mehta.
\newblock {Fair and Efficient Allocations under Subadditive Valuations}.
\newblock {\em arXiv preprint arXiv:2005.06511}, 2020.

\bibitem[\protect\citeauthoryear{Chaudhury \bgroup \em et al.\egroup
  }{2020c}]{CKM+20little}
Bhaskar~Ray Chaudhury, Telikepalli Kavitha, Kurt Mehlhorn, and Alkmini
  Sgouritsa.
\newblock {A Little Charity Guarantees Almost Envy-Freeness}.
\newblock In {\em Proceedings of the Fourteenth Annual ACM-SIAM Symposium on
  Discrete Algorithms}, pages 2658--2672, 2020.

\bibitem[\protect\citeauthoryear{Chen and Liu}{2020}]{CL20fairness}
Xingyu Chen and Zijie Liu.
\newblock {The Fairness of Leximin in Allocation of Indivisible Chores}.
\newblock {\em arXiv preprint arXiv:2005.04864}, 2020.

\bibitem[\protect\citeauthoryear{{\'C}usti{\'c} \bgroup \em et al.\egroup
  }{2015}]{CKW15geometric}
Ante {\'C}usti{\'c}, Bettina Klinz, and Gerhard~J Woeginger.
\newblock {Geometric Versions of the Three-Dimensional Assignment Problem under
  General Norms}.
\newblock {\em Discrete Optimization}, 18:38--55, 2015.

\bibitem[\protect\citeauthoryear{Elkind \bgroup \em et al.\egroup
  }{2016}]{ELP16preference}
Edith Elkind, Martin Lackner, and Dominik Peters.
\newblock {Preference Restrictions in Computational Social Choice: Recent
  Progress}.
\newblock In {\em Proceedings of the Twenty-Fifth International Joint
  Conference on Artificial Intelligence}, pages 4062--4065, 2016.

\bibitem[\protect\citeauthoryear{Farhadi \bgroup \em et al.\egroup
  }{2020}]{FHL+20almost}
Alireza Farhadi, MohammadTaghi Hajiaghayi, Mohamad Latifian, Masoud Seddighin,
  and Hadi Yami.
\newblock {Almost Envy-Freeness, Envy-Rank, and Nash Social Welfare Matchings}.
\newblock {\em arXiv preprint arXiv:2007.07027}, 2020.

\bibitem[\protect\citeauthoryear{Fujita \bgroup \em et al.\egroup
  }{2018}]{FLS18complexity}
Etsushi Fujita, Julien Lesca, Akihisa Sonoda, Taiki Todo, and Makoto Yokoo.
\newblock {A Complexity Approach for Core-Selecting Exchange under
  Conditionally Lexicographic Preferences}.
\newblock {\em Journal of Artificial Intelligence Research}, 63:515--555, 2018.

\bibitem[\protect\citeauthoryear{Garey and Johnson}{1979}]{Garey79:Computers}
Michael Garey and David Johnson.
\newblock {\em {Computers and Intractability}}.
\newblock W. H. Freeman and Company, 1979.

\bibitem[\protect\citeauthoryear{Gigerenzer and
  Goldstein}{1996}]{GG96reasoning}
Gerd Gigerenzer and Daniel~G Goldstein.
\newblock {Reasoning the Fast and Frugal Way: Models of Bounded Rationality}.
\newblock {\em Psychological Review}, 103(4):650, 1996.

\bibitem[\protect\citeauthoryear{Hosseini and
  Larson}{2019}]{hosseini2019multiple}
Hadi Hosseini and Kate Larson.
\newblock {Multiple Assignment Problems under Lexicographic Preferences}.
\newblock In {\em Proceedings of the 18th International Conference on
  Autonomous Agents and MultiAgent Systems}, pages 837--845, 2019.

\bibitem[\protect\citeauthoryear{Irving \bgroup \em et al.\egroup
  }{2006}]{IKM+06rank}
Robert~W Irving, Telikepalli Kavitha, Kurt Mehlhorn, Dimitrios Michail, and
  Katarzyna Paluch.
\newblock {Rank-Maximal Matchings}.
\newblock {\em ACM Transactions on Algorithms}, 2:602--610, 2006.

\bibitem[\protect\citeauthoryear{Kojima}{2009}]{kojima2009random}
Fuhito Kojima.
\newblock {Random Assignment of Multiple Indivisible Objects}.
\newblock {\em Mathematical Social Sciences}, 57(1):134--142, 2009.

\bibitem[\protect\citeauthoryear{Krysta \bgroup \em et al.\egroup
  }{2014}]{krysta2014size}
Piotr Krysta, David Manlove, Baharak Rastegari, and Jinshan Zhang.
\newblock {Size Versus Truthfulness in the House Allocation Problem}.
\newblock In {\em Proceedings of the fifteenth ACM conference on Economics and
  computation}, pages 453--470. ACM, 2014.

\bibitem[\protect\citeauthoryear{Lang \bgroup \em et al.\egroup
  }{2018}]{LMX18voting}
J{\'e}r{\^o}me Lang, J{\'e}r{\^o}me Mengin, and Lirong Xia.
\newblock {Voting on Multi-Issue Domains with Conditionally Lexicographic
  Preferences}.
\newblock {\em Artificial Intelligence}, 265:18--44, 2018.

\bibitem[\protect\citeauthoryear{Mahara}{2020}]{M20existence}
Ryoga Mahara.
\newblock {Existence of EFX for Two Additive Valuations}.
\newblock {\em arXiv preprint arXiv:2008.08798}, 2020.

\bibitem[\protect\citeauthoryear{Mallows}{1957}]{Mallows1957}
Colin~L Mallows.
\newblock {Non-Null Ranking Models}.
\newblock {\em Biometrika}, 44(1/2):114--130, 1957.

\bibitem[\protect\citeauthoryear{Nguyen}{2020}]{N20fairly}
Trung~Thanh Nguyen.
\newblock {How to Fairly Allocate Indivisible Resources Among Agents Having
  Lexicographic Subadditive Utilities}.
\newblock In {\em Frontiers in Intelligent Computing: Theory and Applications},
  pages 156--166. 2020.

\bibitem[\protect\citeauthoryear{Paluch}{2013}]{P13capacitated}
Katarzyna Paluch.
\newblock {Capacitated Rank-Maximal Matchings}.
\newblock In {\em International Conference on Algorithms and Complexity}, pages
  324--335. Springer, 2013.

\bibitem[\protect\citeauthoryear{P\'apai}{2000a}]{Papai00:Strategyproof}
Szilvia P\'apai.
\newblock {Strategyproof Assignment by Hierarchical Exchange}.
\newblock {\em Econometrica}, 68(6):1403--1433, 2000.

\bibitem[\protect\citeauthoryear{P\'apai}{2000b}]{Papai00:Strategyproofquotas}
Szilvia P\'apai.
\newblock {Strategyproof Multiple Assignment Using Quotas}.
\newblock {\em Review of Economic Design}, 5:91--105, 2000.

\bibitem[\protect\citeauthoryear{Plaut and Roughgarden}{2020}]{PR20almost}
Benjamin Plaut and Tim Roughgarden.
\newblock {Almost Envy-Freeness with General Valuations}.
\newblock {\em SIAM Journal on Discrete Mathematics}, 34(2):1039--1068, 2020.

\bibitem[\protect\citeauthoryear{Pruhs and Woeginger}{2012}]{PW12divorcing}
Kirk Pruhs and Gerhard~J Woeginger.
\newblock {Divorcing Made Easy}.
\newblock In {\em International Conference on Fun with Algorithms}, pages
  305--314, 2012.

\bibitem[\protect\citeauthoryear{Saban and Sethuraman}{2014}]{saban2014note}
Daniela Saban and Jay Sethuraman.
\newblock {A Note on Object Allocation under Lexicographic Preferences}.
\newblock {\em Journal of Mathematical Economics}, 50:283--289, 2014.

\bibitem[\protect\citeauthoryear{Schmitt and Martignon}{2006}]{SM06complexity}
Michael Schmitt and Laura Martignon.
\newblock {On the Complexity of Learning Lexicographic Strategies}.
\newblock {\em Journal of Machine Learning Research}, 7(Jan):55--83, 2006.

\bibitem[\protect\citeauthoryear{Schulman and Vazirani}{2015}]{SV15allocation}
Leonard~J Schulman and Vijay~V Vazirani.
\newblock {Allocation of Divisible Goods Under Lexicographic Preferences}.
\newblock In {\em 35th IARCS Annual Conference on Foundations of Software
  Technology and Theoretical Computer Science}, page 543, 2015.

\bibitem[\protect\citeauthoryear{Sikdar \bgroup \em et al.\egroup
  }{2017}]{Sikdar2017:Mechanism}
Sujoy Sikdar, Sibel Adali, and Lirong Xia.
\newblock {Mechanism Design for Multi-Type Housing Markets}.
\newblock In {\em Thirty-First AAAI Conference on Artificial Intelligence},
  pages 684--690, 2017.

\bibitem[\protect\citeauthoryear{Taylor}{1970}]{T70problem}
Michael Taylor.
\newblock {The Problem of Salience in the Theory of Collective
  Decision-Making}.
\newblock {\em Behavioral Science}, 15(5):415--430, 1970.

\bibitem[\protect\citeauthoryear{Varian}{1974}]{varian1974equity}
Hal~R. Varian.
\newblock {Equity, Envy, and Efficiency}.
\newblock {\em Journal of Economic Theory}, 9(1):63--91, 1974.

\bibitem[\protect\citeauthoryear{Zhou}{1990}]{zhou1990conjecture}
Lin Zhou.
\newblock {On a Conjecture by Gale about One-Sided Matching Problems}.
\newblock {\em Journal of Economic Theory}, 52(1):123--135, 1990.

\end{thebibliography}

\clearpage
\begin{center}
\Large{Appendix}
\end{center}

\section{Proof of Proposition~\ref{prop:minimalityGoods}}
\label{sec:app:prop:minimalityGoods}

Recall from \Cref{thm:EFX_PO_SP_Neutral_NonBossy} that the set of all mechanisms satisfying \EFX{}, \PO{}, strategyproofness, non-bossiness, and neutrality is characterized by the family of mechanisms
in Algorithm~\ref{alg:IQSD} (parameterized by $\sigma$). In \Cref{prop:minimalityGoods}, we will show that the aforementioned set of properties is \emph{minimal} in the sense that dropping any one of them necessarily expands the set of mechanisms beyond those in Algorithm~\ref{alg:IQSD}.

\minimalityGoods*

\begin{proof}
\textbf{\EFX{} is necessary}:
    Consider any serial dictatorship quota (\SDQ{}) mechanism where the first agent has a quota of $2$. From \Cref{prop:quota}, we know that this mechanism is \PO{}, strategyproof, non-bossy, and neutral. However, when agents have identical preferences, this mechanism fails to be \EFX{} since the first agent, who is envied by everyone else, receives more than one good (\Cref{prop:efx_property}).

\textbf{\PO{} is necessary}: A mechanism that leaves everything unassigned is \EFX{}, strategyproof, non-bossy, and neutral, but clearly violates Pareto optimality.
    
\textbf{Non-bossiness is necessary}: Consider the following ``conditional'' variant of Algorithm~\ref{alg:IQSD} with four agents and four goods: If the first two agents, $a_1$ and $a_2$, have identical preferences, then the ordering of agents is $\sigma=(a_1,a_2,a_3,a_4)$; otherwise the priority ordering is $\sigma'=(a_1,a_2,a_4,a_3)$.

    It is easy to see that this mechanism is \EFX{} (since each agent gets exactly one item), \PO{} (because of sequencibility), and neutral (because each agent's bundle is a singleton). It is also not hard to see that the mechanism is strategyproof: Indeed, agents $a_1$ and $a_2$ occur in the first and the second positions, respectively, under both orderings, and therefore cannot benefit by misreporting their preferences. Further, agents $a_3$ and $a_4$ cannot alter which ordering is invoked by misreporting, and for any fixed ordering, have no incentive to misreport either (since each instantiation of the mechanism is a serial dictatorship, which is strategyproof).
    
    To see why this mechanism violates non-bossiness, consider the following preference profile:
    \begin{align*}
    a_1: g_1 \, \> \, g_2 \, \> \, g_3 \, \> \, g_4\\ \nonumber
    a_2: g_1 \, \> \, g_2 \, \> \, g_3 \, \> \, g_4\\ \nonumber
    a_3: g_3 \, \> \, g_4 \, \> \, g_1 \, \> \, g_2\\ \nonumber
    a_4: g_3 \, \> \, g_4 \, \> \, g_1 \, \> \, g_2 \nonumber
    \end{align*}
    Here, agent $a_2$ can change the allocation of agent $a_3$ by misreporting her preferences as
    $$a_2: g_2 \, \> \, g_1 \, \> \, g_3 \, \> g_4$$ without changing her own allocation.

\textbf{Neutrality is necessary}: Consider the following ``conditional'' variant of Algorithm~\ref{alg:IQSD}, where agent $1$ always picks a good first, and the ordering over the rest of the agents is decided as follows: if agent $1$ picks $g_1$, then $\sigma=(1,2,3,\dots,n)$. Otherwise, the ordering is $\sigma'=(1,n, n-1,\dots,2)$. It is easy to see that this mechanism is \EFX{}, \PO{}, strategyproof, and non-bossy, but not neutral.

\textbf{Strategyproofness is necessary}: Consider the following mechanism variant of Algorithm~\ref{alg:IQSD}: (i) Fix a priority ordering, say $\sigma=(1,\cdots,n)$, and execute one round of serial dictatorship according to $\sigma$. (ii) Assign the remaining goods according to the following rule: If agent $n$ envies agent $n-1$, all remaining goods are assigned to $n$. Otherwise, agents $n-1$ and $n$ pick the remaining goods in a round-robin fashion.
    
    This mechanism is not strategyproof because in step (ii) the round-robin picking order is manipulable. In particular, agent $n$ can obtain \emph{all} the goods that remain after one round of serial dictatorship by misreporting her preferences (and pretending to envy agent $n-1$).
    
    It is easy to verify that the mechanism is \EFX{}, \PO{}, and neutral. We will now argue that this mechanism is also non-bossy. The intuitive idea is as follows: The first $n-2$ agents cannot change the outcome of any other agent by misreporting without changing their own allocation. Subject to the items allocated to these agents, the bundles of agents $n-1$ and $n$ are complements of each other, and thus, changing one implies changing the other.
    
    Formally, let us suppose, for the sake of contradiction, that the mechanism is bossy. There are two possible cases.
    
    \textbf{Case 1}: Suppose agent $n$ envies $n-1$ under truthful reporting. (i) If $n -1$ misreports to avoid envy from agent $n$, then agent $n$ will pick the good that was assigned to agent $n-1$ under truthful reporting, and thus agent $n -1$'s outcome must change. (ii) If $n$ misreports so to avoid envy, then $n$'s outcome must change since it will now receive half of the remaining goods.
    
    \textbf{Case 2}: Suppose agent $n$ does not envy agent $n -1$ under truthful reporting. (i) If $n$ reports his preferences to declare envy towards $n -1$, then his allocation must change since now he receives all the remaining goods, instead of receiving only half of the remaining goods under truthful reporting. (ii) Agent $n-1$ can only declare envy by picking a less desired good before agent $n$, and thus, changing its outcome, which contradicts bossiness.
\end{proof}

\section{Proof of Theorem~\ref{thm:EFk_RM_NP-complete}}
\label{sec:app:thm:EFk_RM_NP-complete}

For any fixed $k \in \mathbb{N}$, the computational problem \EFkRMExistence{k} asks whether a given instance admits an \EF{k} and rank-maximal allocation (\Cref{defn:EFkRMExistence}). Note that the parameter $k$ is fixed in advance and is not a part of the input. 

\begin{dfn}[\EFkRMExistence{k}]
Given any fair division instance with lexicographic preferences, does there exist an allocation that is envy-free up to $k$ goods $(\EF{k})$ and rank-maximal $(\RM)$?
\label{defn:EFkRMExistence}
\end{dfn}

\EFkRM*
\begin{proof}
Given an allocation, one can check in polynomial time whether it satisfies \EF{k} and rank-maximality. Hence, \EFkRMExistence{k} is in \NP{}.

To prove \NPH{}ness, we will show a reduction from \PITFull{} (\PIT{}), which is known to be \NPC{}~\citep[Page 68]{Garey79:Computers}. An instance of \PIT{} consists of a graph $G = (V,E)$ with $|V|=3q$ vertices for some $q \in \mathbb{N}$. The goal is to determine whether there exists a partition of the vertex set $V$ into $q$ disjoint sets $V^1,\dots,V^q$ of three vertices each such that for every $V^i = \{v_{i,1},v_{i,2},v_{i,3}\}$, the three edges $\{v_{i,1},v_{i,2}\}$, $\{v_{i,2},v_{i,3}\}$, and $\{v_{i,3},v_{i,1}\}$ all belong to $E$.

 We will assume that $G$ is a \emph{balanced tripartite} graph, that is, the set of vertices $V$ can be partitioned into three disjoint sets $W$, $X$, and $Y$ such that $|W| = |X| = |Y| = q$ and no pair of vertices within the same set are adjacent in $E$. It is known that \PIT{} remains \NPC{} even under this restriction~\citep[Proposition 5.1]{CKW15geometric}. We will write $W = \{w_1,\dots,w_q\}$, $X = \{x_1,\dots,x_q\}$, and $Y = \{y_1,\dots,y_q\}$ to denote the vertices of the graph $G$. In addition, we will use $t \coloneqq \binom{3q}{2} - |E|$ to denote the number of \emph{non-edges} in $G$.

\emph{Construction of the reduced instance}: We will construct a fair division instance with $n$ agents and $m$ goods, where $n = q + t \left( (k+2)q + 1 \right)$ and $m = 3q + (k-1)q + t \left( (k+2)q + 1 \right)$. 

The set of agents consists of $q$ \emph{main} agents $a_1,\dots,a_{q}$, and $t$ groups of \emph{dummy} agents, each group comprising of $(k+2)q + 1$ agents. The dummy agents in the $\ell^\text{th}$ group are denoted by $d^\ell_1,d^\ell_2,\dots,d^\ell_{(k+2)q + 1}$.

The set of goods consists of $3q$ \emph{main} goods $\{W_1,\dots,W_q,X_1,\dots,X_q,Y_1,\dots,Y_q\}$, $(k-1)q$ \emph{selector} goods and $t \left( (k+2)q + 1 \right)$ \emph{dummy} goods. The selector goods are classified into $q$ groups of $k-1$ goods each, where the goods in the $\ell^\text{th}$ group are denoted by $S^\ell_1,S^\ell_2,\dots,S^\ell_{k-1}$. The dummy goods are classified into $t$ groups of $(k+2)q + 1$ goods each, where the goods in the $\ell^\text{th}$ group are denoted by $D^\ell_1,D^\ell_2,\dots,D^\ell_{(k+2)q + 1}$.

For any $i,j \in \mathbb{N}$ such that $i \leq j$, we will use the compact notation $D^\ell_{i:j}$ to denote the set $\{D^\ell_{i}, D^\ell_{i+1}, \dots, D^\ell_{j-1}, D^\ell_{j}\}$. Similarly, we will write $S^\ell_{i:j}$, $W_{i:j}$, $X_{i:j}$, and $Y_{i:j}$ to denote $\{S^\ell_{i}, S^\ell_{i+1}, \dots, S^\ell_{j}\}$, $\{W_{i}, W_{i+1}, \dots, W_{j}\}$, $\{X_{i}, X_{i+1}, \dots, X_{j}\}$, and $\{Y_{i}, Y_{i+1}, \dots, Y_{j}\}$, respectively.

We will now describe the preferences of the agents. Let $\vartriangleright$ be a total ordering on the set of all goods given by
\begin{align*}
\vartriangleright \, : \, & S^1_{1:k-1} \> \, S^2_{1:k-1} \> \, \dots \> \, S^q_{1:k-1} \> \, W_{1:q} \> \, X_{1:q} \> \, Y_{1:q} \> \, D^1_{1:(k+2)q+1} \> \, D^2_{1:(k+2)q+1} \> \, \dots \> \, D^t_{1:(k+2)q+1},
\end{align*}
where the goods within each group are ordered in $\vartriangleright$ according to their indices. Thus, for instance, the ordering within the group $S^1_{1:k-1}$ is $S^1_1 \> \, S^1_2 \> \, \dots \> \, S^1_{k-1}$.

For every $i \in [q]$, the preferences of main agent $a_i$ are given by
\begin{align*}
a_i: S^i_{1:k-1} \> \, & S^{i+1}_{1:k-1} \> \, \dots \> \, S^q_{1:k-1} \> \, S^1_{1:k-1} \> \, S^2_{1:k-1} \> \, \dots \> \, S^{i-1}_{1:k-1} \> \, W_{1:q} \> \, X_{1:q} \> \, Y_{1:q} \> \, *,
\end{align*}
where $*$ denotes the rest of the goods ordered according to $\vartriangleright$. Notice that the groups of selector goods are ordered in a cyclic manner by the main agents. In particular, the main agent $a_i$ prefers the selector group $S^i_{1:k-1}$ over any other set of goods.

In order to specify the preferences of the dummy agents, recall that there are as many groups of dummy agents as there are non-edges in $G$. Let the dummy agents in the $\ell^\text{th}$ group, namely $\{d^\ell_1,\dots,d^\ell_{(k+2)q+1}\}$, be associated with the non-edge $\{v_i,v_j\} \notin E$, where $v_i,v_j \in W \cup X \cup Y$. All dummy agents in the $\ell^\text{th}$ group have identical preferences: For any $r \in \{1,\dots,(k+2)q+1\}$, the preferences of the dummy agent $d^\ell_r$ are given by
\begin{align*}
d^\ell_r: D^\ell_{1:(k+2)q} \> \, V_i \> \, V_j \> \, & S^1_{1:k-1} \> \, \dots \> \, S^q_{1:k-1} \> \, D^\ell_{(k+2)q+1} \> \, *,
\end{align*}
where $V_i$ and $V_j$ are the main goods associated with the vertices $v_i$ and $v_j$, respectively, and $*$ once again denotes the rest of the goods ordered according to $\vartriangleright$. This completes the construction of the reduced instance.

Observe that all main goods are ranked inside the top $(k+2)q$ positions by each main agent, whereas each dummy agent ranks any main good outside the top $(k+2)q$. As a result, any rank-maximal allocation for the above instance must assign the main goods exclusively among the main agents. By similar reasoning, for every $i \in [q]$, the selector goods in the $i^\text{th}$ group, namely $S^i_{1:k-1}$, are assigned to the main agent $a_i$, and for every $\ell \in [t]$, all dummy goods in the $\ell^\text{th}$ group, namely $D^\ell_{1:(k+2)q+1}$, are assigned exclusively among the dummy agents in the $\ell^\text{th}$ group $\{d^\ell_1,\dots,d^\ell_{(k+2)q+1}\}$. This necessary condition is also sufficient for rank-maximality, since the main agents have identical ranking of the main goods and the dummy agents in the $\ell^\text{th}$ group have identical ranking of the dummy goods in the $\ell^\text{th}$ group.

We will now argue the equivalence of the solutions.

$(\Rightarrow)$ Let $V^1 \sqcup V^2 \sqcup \dots \sqcup V^q$ be a solution of \PIT{} instance (where $\sqcup$ denotes disjoint union of sets). The desired allocation $A$ can be constructed as follows: For every $i \in [q]$, assign the main goods corresponding to the vertices in $V^i$ to the main agent $a_i$. Next, for every $i \in [q]$, assign the selector goods in the $i^\text{th}$ group, namely $S^i_{1:k-1}$, to the main agent $a_i$. Finally, for every $\ell \in [t]$ and every $i \in \{1,\dots,(k+2)q+1\}$, assign the dummy good $D^\ell_i$ to the dummy agent $d^\ell_i$.

Note that the allocation $A$ assigns each good to exactly one agent; in particular, each main good is assigned to exactly one main agent since the sets $V^1, \dots, V^q$ constitute a partition of the vertex set $V$. Additionally, $A$ satisfies the aforementioned sufficient condition for rank-maximality. All that remains to be shown is that $A$ is \EF{k}.

Since each dummy agent $d^\ell_i$ gets exactly one good, the envy of any other agent towards $d^\ell_i$ can be eliminated by the removal of this good. Therefore, in order to establish that $A$ satisfies \EF{k}, it suffices to bound the envy towards the main agents. Furthermore, each main agent $a_i$ receives its favorite good, namely $S^i_1$, under $A$, and therefore does not envy any other agent (because of lexicographic preferences). Thus, it suffices to focus only on the envy experienced by the dummy agents towards the main agents.

Fix a main agent $a_r$. Suppose, for some $\ell \in [t]$, some dummy agent in the $\ell^\text{th}$ group envies $a_r$. Then, without loss of generality, the dummy agent $d^\ell_{(k+2)q+1}$ must also envy $a_r$. This is because all dummy agents in the same group have identical preferences and the $d^\ell_{(k+2)q+1}$ receives the worst bundle among the agents in its group.

Recall that each group of dummy agents is associated with a non-edge. Let $(v_i,v_j) \notin E$ be the non-edge associated with the $\ell^\text{th}$ dummy group. For a violation of \EF{k} to occur between $a_r$ and $d^\ell_{(k+2)q+1}$, the agent $a_r$ should receive $k+1$ or more goods that are preferred by $d^\ell_{(k+2)q+1}$ over its favorite good in its bundle, namely $D^\ell_{(k+2)q+1}$. This is possible only if, in addition to $k-1$ selector goods, $a_r$ also receives the main goods $V_i$ and $V_j$ in the allocation $A$.

Recall that the main goods assigned to the agent $a_r$ come from the set $V^r$, which corresponds to a `triangle' in the graph $G$. Thus, each pair of main goods assigned to a main agent must correspond to vertices that constitute an edge in the graph $G$. This, however, contradicts the fact that $(v_i,v_j)$ is a non-edge. Therefore, the allocation $A$ must satisfy \EF{k}, as desired.

($\Leftarrow$) Now suppose that there exists an allocation $A$ that is \EF{k} and rank-maximal. Then, $A$ must satisfy the aforementioned necessary condition for rank-maximality. That is, all main goods are assigned among the main agents, the selector goods in the $i^\text{th}$ group are assigned to the main agent $a_i$, and all dummy goods in the $\ell^\text{th}$ group are assigned among the dummy agents in the $\ell^\text{th}$ group.

We will now argue that each dummy agent must get at least one good in $A$. Indeed, rank-maximality requires that each main agent is assigned $k-1$ selector goods. In addition, since there are $3q$ main goods to be allocated among $q$ main agents, some main agent, say $a_r$, must get at least three main goods, resulting in a total of at least $k+2$ goods in its bundle. If some dummy agent, say $d^\ell_i$, gets an empty bundle, then $d^\ell_i$ will envy $a_r$, and the envy cannot be eliminated by the removal of $k$ goods, creating a violation of \EF{k}. Thus, every dummy agent must get at least one good. Furthermore, for any $\ell \in [t]$, there are $(k+2)q+1$ dummy agents in the $\ell^\text{th}$ group and equally many dummy goods. Thus, each dummy agent gets exactly one dummy good in $A$. Without loss of generality, we can assume that the agent $d^\ell_i$ gets the good $D^\ell_i$. In particular, the only good assigned to the agent $d^\ell_{(k+2)q+1}$ is $D^\ell_{(k+2)q+1}$.

We will now argue that each main agent gets exactly three main goods. By way of contradiction, suppose that some main agent, say $a_r$, gets four or more main goods. Then, at least two of these goods, say $V_i$ and $V_j$, must both be $W$-goods (or $X$-goods or $Y$-goods). This means that the corresponding vertices $v_i$ and $v_j$ must both belong to either $W$ or $X$ or $Y$, implying that $\{v_i,v_j\}$ is a non-edge. By construction, each non-edge is associated with a group of dummy agents, so let the non-edge $\{v_i,v_j\}$ be associated with the $\ell^\text{th}$ dummy group. This would imply that $d^{\ell}_{(k+2)q+1}$ prefers $k+1$ goods assigned to $a_r$---namely, the two main goods $V_i$ and $V_j$ as well as $k-1$ selector goods $S^r_{1:k-1}$---over the only good in its own bundle, namely $D^{\ell}_{(k+2)q+1}$. Therefore, the envy of $d^{\ell}_{(k+2)q+1}$ towards $a_r$ cannot be addressed by removing at most $k$ goods from $A_r$. This indicates a violation of \EF{k}, which is a contradiction. Therefore, each main agent must get exactly three main goods. The above argument also establishes that these three main goods must come from three different sets $W$, $X$, and $Y$. Thus, for every $r \in [q]$, the bundle assigned to agent $a_r$ is of the form $A_r = \{W_h,X_i,Y_j,S^r_{1:k-1}\}$.

We will now show that one can infer a solution of \PIT{} from the allocation $A$. Indeed, for every $r \in [q]$, let $V^r \coloneqq \{w_h,x_i,y_j\}$ whenever $\{W_h,X_i,Y_j\} \in A_r$. Notice that the sets $V^1,\dots,V^q$ constitute a valid partition of the vertex set of $G$. This is because each main good is assigned to exactly one main agent in $A$ and therefore each of the corresponding vertices is assigned to exactly one of the sets $V^1,\dots,V^q$. Furthermore, each set $V^r = \{v_{r,1},v_{r,2},v_{r,3}\}$ is a `triangle', i.e., each pair of vertices in $V^r$ must form an edge in $G$, i.e., $\{v_{r,1},v_{r,2}\}, \{v_{r,2},v_{r,3}\}, \{v_{r,1},v_{r,3}\} \in E$. Indeed, if some pair of vertices in $V^r$ is a non-edge, then by earlier reasoning, some dummy agent will create an \EF{k} violation with the main agent $a_r$. Therefore, the sets $V^1,\dots,V^q$ constitute a valid solution of \PIT{}.

This completes the proof of \Cref{thm:EFk_RM_NP-complete}.
\end{proof}

\section{Proof of Proposition~\ref{prop:EF1_RM_ThreeAgents}}
\label{sec:app:prop:EF1_RM_ThreeAgents}

Recall the statement of \Cref{prop:EF1_RM_ThreeAgents}:
\EFoneRMThreeAgents*

\begin{proof}
If the top-ranked goods of the three agents are all distinct, then an envy-free and rank-maximal allocation exists (\Cref{prop:EF_RM_Polytime}). Otherwise, suppose exactly two agents, say $a_1$ and $a_2$, have the same top-ranked good (the case where the top-ranked goods of all three agents coincide can be handled similarly).

We will start by carrying out all assignments that are uniquely determined by the rank-maximality condition. That is, if there is a unique agent for whom the highest ranking of a fixed good $g$ is realized, then we assign $g$ to that agent. 

Next, let us consider two cases based on who out of $a_1$ or $a_2$ gets its top-ranked good, say $g'$. If $a_1$ gets $g'$, then $a_2$ is given the highest-ranked good in its list that it could be assigned without violating rank-maximality. If the corresponding partial allocation is \EF{1}, then we can extend it to a rank-maximal allocation. Otherwise, we consider the other case where $a_2$ gets $g'$ and $a_1$ gets the highest-ranked feasible good. If no \EF{1} partial allocation exists, the algorithm returns NO.
\end{proof}

\section{Comparing \MMS{} with Relaxations of Envy-freeness}
\label{sec:app:prop:EFX_implies_MMS}

Let us start by proving that \EFX{} is a strictly stronger notion than \MMS{} on the domain of lexicographic preferences.

\begin{restatable}[]{prop}{EFXimpliesMMS}
Given any instance with lexicographic preferences, any \EFX{} allocation satisfies maximin share guarantee $(\MMS{})$ but the converse is not always true.
\label{prop:EFX_implies_MMS}
\end{restatable}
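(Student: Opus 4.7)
The plan is to prove both directions of \Cref{prop:EFX_implies_MMS} by leveraging the very clean form that the maximin share takes under lexicographic preferences, combined with \Cref{prop:efx_property}.

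I will first record that for any agent $i$ with ranking $g_1 \succ_i g_2 \succ_i \dots \succ_i g_m$, the unique optimal MMS partition is the one singling out the top $n-1$ goods, so $\MMS_i = \{g_n, g_{n+1}, \dots, g_m\}$ (this is exactly the characterization used in the proof of \Cref{thm:MMS_RM_Goods}). From this I extract the easy lemma that, under lexicographic preferences, a bundle $B \subseteq M$ satisfies $B \succeq_i \MMS_i$ if and only if either (a) $B$ contains at least one of $g_1,\dots,g_{n-1}$, or (b) $B \supseteq \{g_n,\dots,g_m\}$.

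For the forward direction, fix an \EFX{} allocation $A$ and an agent $i$. If $A_i$ contains any of $i$'s top $n-1$ goods then condition (a) holds and we are done. Otherwise, every $g_k$ with $k \leq n-1$ is held by some other agent $h_k$. The key observation (immediate from lexicographic preferences) is that agent $i$ envies any agent $h$ holding a good $g_k$ with $k \le n-1$ that $i$ does not own, because the only goods $i$ strictly prefers to $g_k$ are $g_1,\dots,g_{k-1}$, none of which lie in $A_i$ by assumption. Applying \Cref{prop:efx_property} to each such envy, every $h_k$ has $|A_{h_k}| = 1$, which forces $A_{h_k} = \{g_k\}$ and in turn forces the $h_k$ to be pairwise distinct. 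This accounts for $n-1$ agents besides $i$, i.e., all other agents, and uses up exactly the goods $g_1,\dots,g_{n-1}$. Hence $A_i = M \setminus \{g_1,\dots,g_{n-1}\} = \{g_n,\dots,g_m\} = \MMS_i$, establishing condition (b).

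For the converse, I will exhibit a small counterexample. Take $n=3$, $m=5$, with every agent having the preference $g_1 \succ g_2 \succ g_3 \succ g_4 \succ g_5$, so $\MMS_i = \{g_3,g_4,g_5\}$ for each $i$. The allocation $A_1 = \{g_1,g_5\}$, $A_2 = \{g_2\}$, $A_3 = \{g_3,g_4\}$ is \MMS{} (each of $A_1,A_2,A_3$ contains a top-$2$ good), but agent $2$ envies agent $1$, and removing $g_5$ from $A_1$ leaves $\{g_1\} \succ_2 \{g_2\} = A_2$, violating \EFX{}.

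I do not anticipate a serious obstacle here; the only step that needs to be stated carefully is the "envy forces singleton" chain in the forward direction, which relies crucially on the lexicographic nature of the extension (so that possessing a top-$(n-1)$ good of $i$ that $i$ lacks is already enough to induce envy).
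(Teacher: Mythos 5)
Your forward direction is correct and is essentially the paper's argument in contrapositive form: the paper assumes an \EFX{} allocation fails \MMS{}, so some agent $i$ holds a strict subset of its bottom-$(m-n+1)$ goods, concludes that $i$ envies every other agent (each of whom must then hold exactly one good by \Cref{prop:efx_property}), and derives a contradiction from the leftover non-top good; your version runs the same chain directly. That part is fine.

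The converse direction, however, is broken. In your instance ($n=3$, identical preferences over $g_1 \succ \dots \succ g_5$), the bundle $A_3 = \{g_3,g_4\}$ contains \emph{neither} of agent $3$'s top-$(n-1) = $ top-$2$ goods ($g_1$ and $g_2$), and it is a \emph{strict} subset of the bottom-$3$ goods $\{g_3,g_4,g_5\}$, so by your own lemma $A_3 \prec_3 \MMS_3 = \{g_3,g_4,g_5\}$ and the allocation is not \MMS{}. Your parenthetical claim that ``each of $A_1,A_2,A_3$ contains a top-$2$ good'' is simply false for $A_3$. Worse, the instance cannot be repaired by choosing a different allocation: with identical preferences and $n=3$, an \MMS{} allocation must give two agents one top-$2$ good each and the third agent \emph{all} of $\{g_3,\dots,g_m\}$, which forces the two envied agents to hold singletons, so every complete \MMS{} allocation there is automatically \EFX{}. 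You need non-identical preferences to separate the notions; the paper uses four agents where $a_1,a_2$ share the ranking $g_1 \succ g_2 \succ g_3 \succ g_4 \succ g_5$ while $a_3$ and $a_4$ rank $g_4$ and $g_5$ first, respectively, and the allocation $\left\{\{g_1,g_2\},\{g_3\},\{g_4\},\{g_5\}\right\}$: every agent holds one of its top-$3$ goods (so \MMS{} holds), yet $a_2$ envies $a_1$, who holds two goods, violating \EFX{} by \Cref{prop:efx_property}.
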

\begin{proof}
Suppose, for contradiction, that there is an \EFX{} allocation $A$ that is not \MMS{}. Then, there must exist some agent, say $i$, that receives a strict subset of its bottom-$(m-n+1)$ goods. (This is because under lexicographic preferences, if an agent receives one or more of its top-$(n-1)$ goods, or if it receives all of its bottom-$(m-n+1)$ goods, then its maximin share guarantee is satisfied.)

Let $S$ denote the set of top-$(n-1)$ goods according to agent $i$'s preference. By the above observation, the goods in $S$ are allocated among the other $n-1$ agents. In order for the allocation $A$ to be \EFX{}, no agent should get more than one good in $S$ (since we know from \Cref{prop:efx_property} that any envied agent gets exactly one good in an \EFX{} allocation). Thus, each agent gets exactly one good in $S$, and therefore, agent $i$ envies every other agent.

Since agent $i$'s bundle is a \emph{strict} subset of its bottom-$(m-n+1)$ goods, there must exist a good in $M \setminus S$ that is assigned to an agent other than $i$. This, however, results in an envied agent getting more than one good, thus violating the assumption that $A$ is \EFX{}~(\Cref{prop:efx_property}). Therefore, $A$ must satisfy \MMS{}.

To prove that \MMS{} does not always imply \EFX{}, consider the following instance with four agents and five goods:
\begin{align*}
    a_1: g_1 \, \> \, g_2 \, \> \, g_3 \, \> \, g_4 \, \> \, g_5\\ \nonumber
    a_2: g_1 \, \> \, g_2 \, \> \, g_3 \, \> \, g_4 \, \> \, g_5\\ \nonumber
    a_3: g_4 \, \> \, g_1 \, \> \, g_2 \, \> \, g_3 \, \> \, g_5\\ \nonumber
    a_4: g_5 \, \> \, g_1 \, \> \, g_2 \, \> \, g_3 \, \> \, g_4 \nonumber
\end{align*}
The allocation $\{\{g_1,g_2\},\{g_3\},\{g_4\},\{g_5\}\}$ is \MMS{} since each agent receives one or more of its top-$(n-1)$ goods. However, it is not \EFX{} because agent $a_1$, who is envied by $a_2$, receives more than one good~(\Cref{prop:efx_property}).
\end{proof}

It can also be shown that \EF{1} and \MMS{} are incomparable notions in that one does not always imply the other. The fact that \MMS{} does not imply \EF{1} follows from the example in the proof of \Cref{prop:EFX_implies_MMS}: Indeed, agent $a_2$ continues to envy agent $a_1$ even after the removal of any good from the latter's bundle. To prove that \EF{1} does not imply \MMS{}, consider the following instance with identical preferences:
\begin{align*}
    a_1: g_1 \, \> \, g_2 \, \> \, g_3 \, \> \, g_4 \, \> \, g_5\\ \nonumber
    a_2: g_1 \, \> \, g_2 \, \> \, g_3 \, \> \, g_4 \, \> \, g_5\\ \nonumber
    a_3: g_1 \, \> \, g_2 \, \> \, g_3 \, \> \, g_4 \, \> \, g_5\\ \nonumber
    a_4: g_1 \, \> \, g_2 \, \> \, g_3 \, \> \, g_4 \, \> \, g_5 \nonumber
\end{align*}
The allocation $\{\{g_4\},\{g_1,g_5\},\{g_2\},\{g_3\}\}$ is \EF{1}; in particular, any agent's envy towards $a_2$ can be eliminated by the removal of the good $g_1$. However, it fails \MMS{} since $a_1$ gets a strict subset of its bottom-$(m-n+1)$ goods.

\section{MMS Characterization and Minimality}
\label{sec:app:mms}

When requiring  \PO{}, strategyproofness, non-bossiness, and neutrality, mechanisms satisfying \MMS{} are restricted to those presented in Algorithm~\ref{alg:IQSD}. This is due the fact that, subject to these additional properties, a mechanism satisfies \EFX{} if and only if it satisfies \MMS{}.

\begin{restatable}{prop}{MMSPropertyGoods}
An allocation of goods is \MMS{} if and only if each agent's bundle consists of one or more goods from among its top-$(n-1)$ goods or all of its bottom $m-n+1$ goods.
\label{prop:mms_property_Goods}
\end{restatable}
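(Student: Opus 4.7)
The plan is to unpack the definition of \MMS{} under lexicographic preferences by first pinning down each agent's MMS partition and MMS value, then translating the dominance $A_i \succeq_i \MMS_i$ into the combinatorial statement of the proposition.

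Fix an agent $i$ with preference $g_1 \succ_i g_2 \succ_i \cdots \succ_i g_m$. I would first show that the partition $P^\star = (\{g_1\}, \{g_2\}, \dots, \{g_{n-1}\}, \{g_n, g_{n+1}, \dots, g_m\})$ achieves the max-min, so that $\MMS_i = \{g_n, \dots, g_m\}$. The upper bound uses pigeonhole: in any $n$-partition, at least one part contains no good from the top $n-1$ (since $n-1$ goods can occupy at most $n-1$ of the $n$ parts), so that part is a subset of $\{g_n, \dots, g_m\}$ and hence weakly less preferred than $\{g_n, \dots, g_m\}$ itself under lex (adding goods to a bundle only helps when every item is desirable). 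For the matching lower bound, every part of $P^\star$ is at least as good as $\{g_n, \dots, g_m\}$: each singleton $\{g_k\}$ with $k < n$ beats $\{g_n, \dots, g_m\}$ because $g_k$ dominates every good in the bottom segment (the witness is $g = g_k$, with the associated ``dominance set'' empty), and the last part equals $\{g_n,\dots,g_m\}$.

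Next, I would translate $A_i \succeq_i \MMS_i$. Under lex, $A_i \succ_i \{g_n, \dots, g_m\}$ iff there exists $g \in A_i \setminus \{g_n, \dots, g_m\}$ with every $g' \in \{g_n, \dots, g_m\}$ satisfying $g' \succ_i g$ already in $A_i$. But any such $g$ must lie in $\{g_1, \dots, g_{n-1}\}$ and therefore dominates every element of $\{g_n, \dots, g_m\}$, making the set $\{g' \in \{g_n, \dots, g_m\} : g' \succ_i g\}$ empty. Hence $A_i \succ_i \MMS_i$ is equivalent to $A_i \cap \{g_1, \dots, g_{n-1}\} \neq \emptyset$; together with the equality case $A_i = \{g_n, \dots, g_m\}$, this yields exactly the stated characterization. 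Aggregating the equivalence over all agents completes the proof.

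The argument is essentially a bookkeeping exercise with the lex definition, so no step presents a real obstacle; the only point that requires care is the equality case, since an agent whose bundle contains no top-$(n-1)$ good must receive the \emph{entire} bottom segment (not merely a subset of it) in order to meet its MMS guarantee.
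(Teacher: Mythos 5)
Your proof is correct and follows essentially the same route as the paper's: identify the agent's unique MMS partition $(\{g_1\},\dots,\{g_{n-1}\},\{g_n,\dots,g_m\})$, conclude $\MMS_i=\{g_n,\dots,g_m\}$, and translate $A_i\succeq_i \MMS_i$ under the lexicographic extension. The paper merely asserts these facts, whereas you supply the pigeonhole upper bound and the equality-case analysis explicitly, so your write-up is a more complete version of the same argument.
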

\begin{proof}
Consider the agent $i \in N$. Let $\succ_i = g_1 \succ g_2 \succ \dots \succ g_m$. Then, the \MMS{} partition of agent $i$ is uniquely defined, and its \MMS{} value is given by $\MMS_i = \min \left\{\{c_1\},\{c_2\},\dots, \{c_{n}, \ldots, c_m\}\right\}$, where $\min\{\cdot\}$ denotes the least-preferred bundle with respect to the lexicographic extension of $\succ_i$.
\end{proof}

Therefore, we have the following characterization.

\begin{restatable}{thm}{thmMMSCh}
For any ordering $\sigma$ of the agents, Algorithm~\ref{alg:IQSD} is \MMS{}, \PO{}, strategyproof, non-bossy, and neutral. Conversely, any mechanism satisfying these properties can be implemented by Algorithm~\ref{alg:IQSD} for some $\sigma$.
\label{thm:EFX_PO_SP_Neutral_NonBossy_MMS}
\end{restatable}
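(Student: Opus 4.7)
The plan is to mirror the proof of Theorem~\ref{thm:EFX_PO_SP_Neutral_NonBossy}, replacing the EFX structural characterization (Proposition~\ref{prop:efx_property}) with the MMS structural characterization (Proposition~\ref{prop:mms_property_Goods}). The underlying reason both characterizations pin down the same family of mechanisms is that within the SDQ class, MMS and EFX select exactly the same quota profile, namely $(1, 1, \ldots, 1, m-n+1)$.

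For the forward direction, I would note that Algorithm~\ref{alg:IQSD} is a special case of SDQ with the above quotas, so that \PO{}, strategyproofness, non-bossiness, and neutrality follow immediately from Proposition~\ref{prop:quota}. For \MMS{}, I would argue bundle by bundle: each of the first $n-1$ agents picks its favorite remaining good, so its bundle contains one of its top-$(n-1)$ goods. The last agent in $\sigma$ receives all $m-n+1$ leftover goods; either this bundle contains one of its top-$(n-1)$ goods, or the first $n-1$ agents exhausted exactly its top-$(n-1)$ goods, in which case its bundle equals its bottom $m-n+1$ goods. Either way, Proposition~\ref{prop:mms_property_Goods} certifies \MMS{}.

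For the converse, I would start by invoking Proposition~\ref{prop:quota} to conclude that any mechanism satisfying the listed properties is \SDQ{} for some permutation $\sigma$ (WLOG the identity) and some quotas $(q_1, \ldots, q_n)$ summing to $m$. The $m < n$ case is handled as in Theorem~\ref{thm:EFX_PO_SP_Neutral_NonBossy}, so I assume $m \geq n$ and aim to show $q_j = 1$ for all $j \in [n-1]$. Feed the mechanism the profile of identical preferences $g_1 \succ g_2 \succ \cdots \succ g_m$, so that agent $i$ receives the contiguous block $\{g_{Q_{i-1}+1}, \ldots, g_{Q_i}\}$, where $Q_i \coloneqq q_1 + \cdots + q_i$. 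If some $q_j = 0$ for $j < n$, then agent $j$ receives the empty bundle, which (since $m \geq n$) contains no top-$(n-1)$ good and is not the full bottom $m-n+1$, contradicting Proposition~\ref{prop:mms_property_Goods}. Hence $q_j \geq 1$ for all $j < n$, so $Q_{n-1} \geq n-1$. But then agent $n$'s block begins no earlier than $g_n$ and therefore contains no top-$(n-1)$ good; Proposition~\ref{prop:mms_property_Goods} forces this block to equal $\{g_n, \ldots, g_m\}$, pinning down $Q_{n-1} = n-1$. Combined with $q_j \geq 1$, this yields $q_j = 1$ for every $j < n$, and consequently $q_n = m-n+1$, so $f$ coincides with Algorithm~\ref{alg:IQSD} for ordering $\sigma$.

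The main obstacle is simply constructing the identical-preferences instance and carefully tracking the quota-to-bundle translation; everything else is a clean corollary of Propositions~\ref{prop:quota} and~\ref{prop:mms_property_Goods}. A shorter alternative would be to observe that within the SDQ family, MMS and EFX are equivalent (both select the same quotas) and deduce the converse in one line from Theorem~\ref{thm:EFX_PO_SP_Neutral_NonBossy}; however, the direct argument above is essentially self-contained and closely parallels the template of the EFX proof.
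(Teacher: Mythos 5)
Your proposal is correct, and it is worth noting that the paper itself never writes out a proof of this theorem: it simply records (in the surrounding text of the appendix section) that \Cref{prop:quota} confines the candidate mechanisms to \SDQ{}, and that within the \SDQ{} family the \MMS{} and \EFX{} constraints single out the same quota vector, so the result is treated as an immediate corollary of \Cref{thm:EFX_PO_SP_Neutral_NonBossy}. That is exactly the ``shorter alternative'' you mention at the end. Your main argument instead re-derives the quota restriction directly from \Cref{prop:mms_property_Goods} on an identical-preferences profile: ruling out $q_j=0$ via the empty bundle, then using the fact that agent $n$'s contiguous block starts no earlier than $g_n$ to force $Q_{n-1}=n-1$. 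This is sound and has the advantage of actually \emph{verifying} the claimed \EFX{}/\MMS{} coincidence within \SDQ{} rather than asserting it; your forward-direction check (each of the first $n-1$ agents gets one of its top-$(n-1)$ goods, and the last agent's leftover bundle either contains a top-$(n-1)$ good or equals exactly its bottom $m-n+1$ goods) is also correct. The only thing glossed over is the $m<n$ case in the forward direction, where every allocation is trivially \MMS{} since some part of any $n$-partition is empty; this matches the level of care in \Cref{prop:mms_property_Goods} itself and is not a real gap.
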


Notice that the \EFX{} and \MMS{} allocations do not necessarily coincide, but due to~\Cref{prop:quota}, under strategyproofness, non-bossiness, and neutrality, the set of mechanisms is restricted to \SDQ{}s, guaranteeing that \EFX{} and \MMS{} allocations do coincide. Nonetheless, although the family of mechanisms satisfying \MMS{} or \EFX{} along with \PO{}, strategyproofness, non-bossiness, and neutrality are the same, the proof of minimality of these properties include subtle differences.

\begin{restatable}[]{thm}{MMSMinimality}
The set of \MMS{}, \PO{}, strategyproofness, non-bossiness, and neutrality is a minimal set of properties for characterizing Algorithm~\ref{alg:IQSD}.
\label{thm:MMS_PO_SP_Neutral_NonBossy}
\end{restatable}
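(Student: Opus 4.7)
The plan is to mirror the structure of \Cref{prop:minimalityGoods}, constructing for each of the five properties a mechanism that satisfies the other four yet does not coincide with any instance of \Cref{alg:IQSD}. Three of the five cases (strategyproofness, non-bossiness, and neutrality) can reuse the same conditional mechanisms from the EFX minimality proof after a short verification that \MMS{} still holds; the \MMS{} and \PO{} cases need fresh constructions.

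For \MMS{} necessity, I would take the \SDQ{} mechanism with quotas $(2,1,\ldots,1,m-n)$ on an instance with $n=3$ and $m=4$. By \Cref{prop:quota} this is \PO{}, strategyproof, non-bossy, and neutral. On identical preferences the second agent receives the singleton $\{g_3\}$, while its \MMS{} bundle is the pair $\{g_3, g_4\}$, so \MMS{} is violated via \Cref{prop:mms_property_Goods}. This mechanism lies outside the \Cref{alg:IQSD} family because the latter always uses $q_1 = 1$.

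The main obstacle is \PO{} necessity, since the empty-allocation counterexample used in \Cref{prop:minimalityGoods} is no longer \MMS{} once $m \geq n$. The fix is a conditional mechanism: run \Cref{alg:IQSD} with a fixed ordering whenever $m \geq n$, and return the all-empty allocation whenever $m < n$. Strategyproofness, non-bossiness, and neutrality are preserved because the branching depends only on the cardinalities $n$ and $m$ and not on any agent's report or on the labels of the goods. \MMS{} holds in both branches: in the $m \geq n$ branch it follows from \Cref{thm:EFX_PO_SP_Neutral_NonBossy_MMS}, while in the $m < n$ branch every \MMS{} bundle is empty (by \Cref{prop:mms_property_Goods}), so the all-empty allocation trivially satisfies \MMS{}. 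However, \PO{} fails on any instance with $1 \leq m < n$, since an unassigned good can be given to some empty-bundle agent as a Pareto improvement. The mechanism differs from \Cref{alg:IQSD} precisely on these small-$m$ instances, where the latter produces a partial allocation with some singletons.

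For the remaining three properties I plan to port the counterexamples of \Cref{prop:minimalityGoods} verbatim and verify \MMS{} in each. The non-bossiness and neutrality constructions either live in the $m = n$ regime (where \MMS{} holds whenever each agent gets at least one good, because the \MMS{} bundle is the singleton containing the worst good) or are valid instantiations of \Cref{alg:EFX+PO} and thus inherit \MMS{} from \Cref{thm:EFX_PO_SP_Neutral_NonBossy_MMS}. The strategyproofness counterexample, which replaces Phase 2 of \Cref{alg:IQSD} by a round-robin between agents $n-1$ and $n$ whenever $n$ does not envy $n-1$ after Phase 1, needs one short structural observation: in the round-robin branch, agent $n$'s Phase 1 pick must lie in its top-$(n-1)$ goods. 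Otherwise all of $n$'s top-$(n-1)$ goods would have been exhausted in Phase 1, and in particular $n-1$'s pick would be strictly preferred by $n$ to its own pick, contradicting the no-envy precondition. Combined with \Cref{prop:mms_property_Goods}, this yields \MMS{} in the round-robin branch and completes the argument.
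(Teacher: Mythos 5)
Your proposal is correct and follows the paper's overall template (one counterexample per property, mirroring \Cref{prop:minimalityGoods}), but it takes a genuinely different route in exactly the two cases you flagged as delicate. For \PO{}, the paper does not retreat to the empty allocation: it builds a conditional mechanism for $m>n$ that runs one round of serial dictatorship and then either hands the leftovers to the last agent or discards them, depending on whether that agent's pick is ranked $n$ in its own list---a trigger chosen precisely so that \Cref{prop:mms_property_Goods} certifies \MMS{} in both branches. Your all-empty mechanism on instances with $1 \le m < n$ is a legitimate and arguably cleaner witness (every maximin share is empty there, and a constant mechanism is trivially strategyproof, non-bossy, and neutral, so nothing delicate needs checking), but it only exhibits the failure in the degenerate regime $m<n$, whereas the paper's example shows the remaining four properties are already insufficient for $m>n$; on the other hand, your route avoids having to argue strategyproofness of a conditional mechanism that is not literally a single \SDQ{} mechanism, which is the subtle point in the paper's construction. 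For strategyproofness, the paper likewise replaces the envy-based trigger of \Cref{prop:minimalityGoods} by the same ``pick is ranked $n$'' trigger so that \MMS{} follows immediately from \Cref{prop:mms_property_Goods}; you instead keep the envy-based trigger and supply the missing structural lemma (in the round-robin branch, agent $n$'s Phase-1 pick lies in its top-$(n-1)$ goods), which is a correct and self-contained alternative---just record the symmetric check for the other branch as well: agent $n$'s pick is always within its top $n$ goods, so if it is ranked exactly $n$ then receiving all leftovers gives agent $n$ all of its bottom $m-n+1$ goods, and \MMS{} again holds. Your \MMS{}, non-bossiness, and neutrality cases coincide with the paper's argument.
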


\begin{proof}
The proofs of necessity of MMS, non-bossiness, and neutrality are identical to those in \Cref{prop:minimalityGoods} by replacing EFX with MMS. 

\noindent \textbf{PO is necessary}.
Consider the following conditional mechanism: (i) Fix a priority ordering, say $\sigma=(1,\cdots,n)$, and execute one round of serial dictatorship according to $\sigma$.
(ii) Assign the remaining goods according to the following rule: 
if agent $n$ receives a good that is ranked $n$ according to $\succ_{n}$, then all the remaining goods are assigned to $n$.
Otherwise, throw away the remaining goods.

By \cref{prop:mms_property_Goods}, this mechanism is MMS. This rule is an SDQ mechanism with a quota of $(1, \ldots, m-n+1)$ or $(1,\ldots, 1)$. Thus, it is neutral, non-bossy, and strategyproof. However, the mechanism is not PO since it may throw away some goods if the quota of $(1,\ldots, 1)$ is selected.

\noindent \textbf{Strategyproofness is necessary}.
Consider the following mechanism as a variant of Algorithm~\ref{alg:IQSD}: (i) Fix a priority ordering, say $\sigma=(1,\cdots,n)$, and execute one round of serial dictatorship according to $\sigma$.
(ii) Assign the remaining goods according to the following rule: 
if agent $n$ receives a good that is ranked $n$ according to $\succ_{n}$, then all the remaining goods are assigned to $n$. Otherwise, agents $n-1$ and $n$ pick the remaining goods in a round-robin fashion.

Notice that the above mechanism is MMS according to \Cref{prop:mms_property_Goods}. The rest of the proof follows precisely as the one for \Cref{prop:minimalityGoods}.
\end{proof}

\section{Experiments}\label{sec:app:experiments}

In \Cref{fig:mallows}, we observe that the difference between the fraction of instances that admit an \EFX{}+\RM{} allocation and the fraction of instances that admit an \EF{}+\RM{} allocation follows a decreasing trend as the number of goods increase. We conjecture that this is because of the following reason: From~\Cref{prop:efx_property}, we know that under an \EFX{} allocation, no envied agent can be assigned more than one good. Thus, in a regime where an EF allocation doesn't exist, a sufficient condition for the non-existence of an EFX+RM allocation is that each agent is assigned two or more goods. 

To test this conjecture experimentally, we find the fraction of ``good'' instances that admit an \RM{} allocation where some agent receives at most one good, using the data from our experiments in \Cref{fig:mallows}. We limit our experiments to instances with $m=50$ goods, because we observed that for larger values of $m$, the integer linear programs to check whether an instance is ``good'' or admits an \EF{} allocation take too long to run. We omit the case of $\phi=0$ which induces profiles with identical preferences, where an \EF{}+\RM{} allocation never exists, and an \EFX{}+\RM{} allocation always exists.

\begin{figure}[h]
	\centering
	\includegraphics[width=0.49\linewidth]{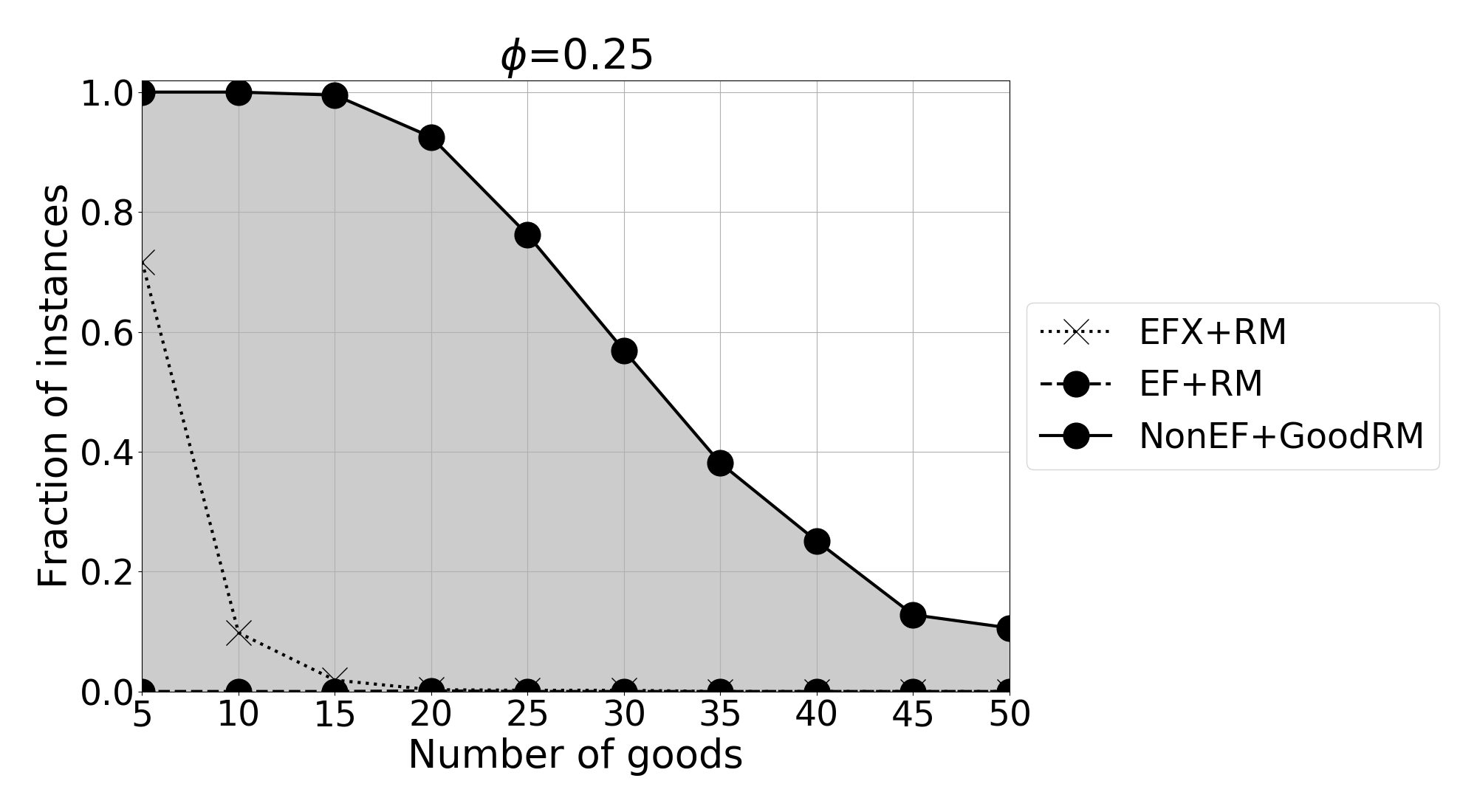}
	\hfill \includegraphics[width=0.49\linewidth]{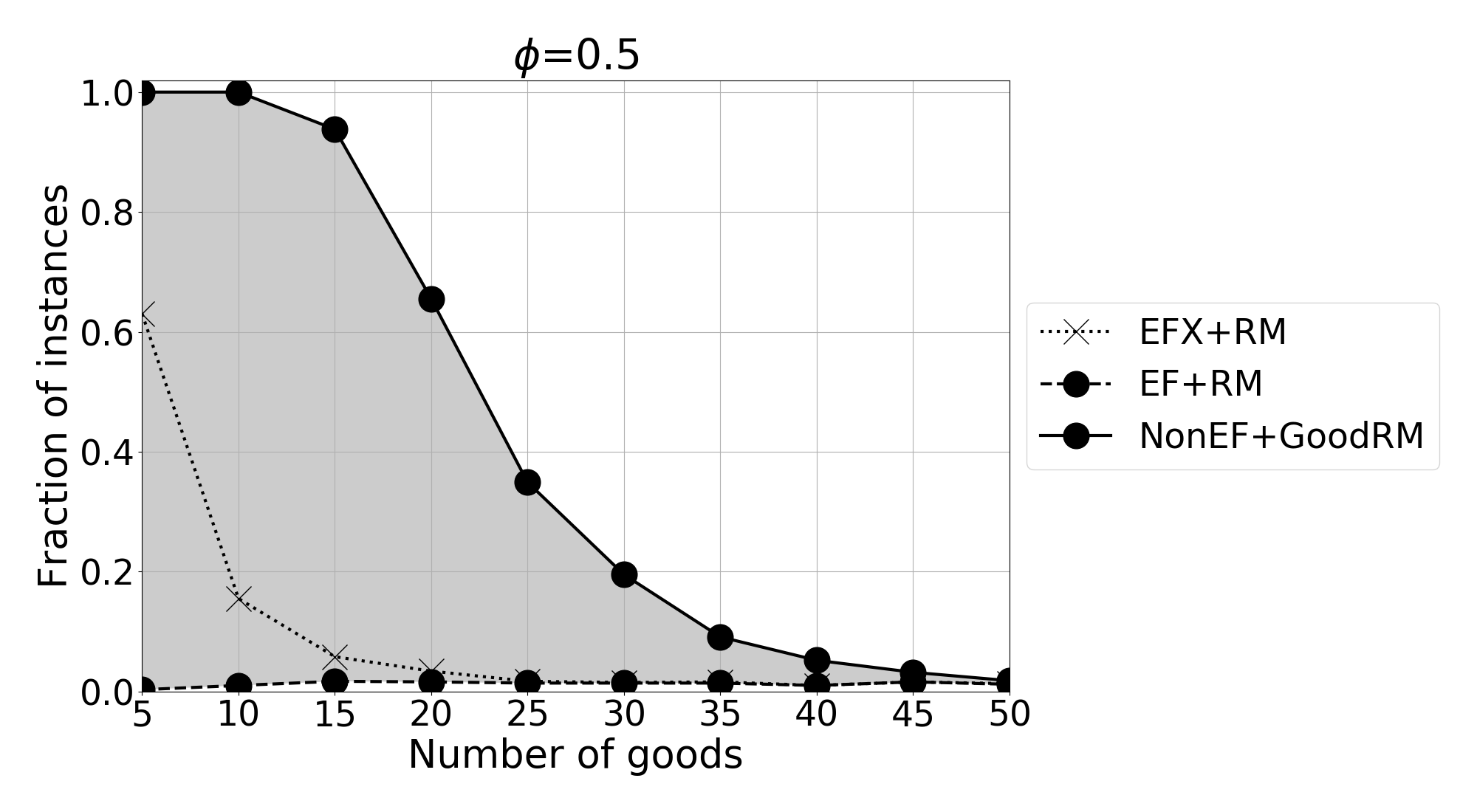}\\
	\includegraphics[width=0.49\linewidth]{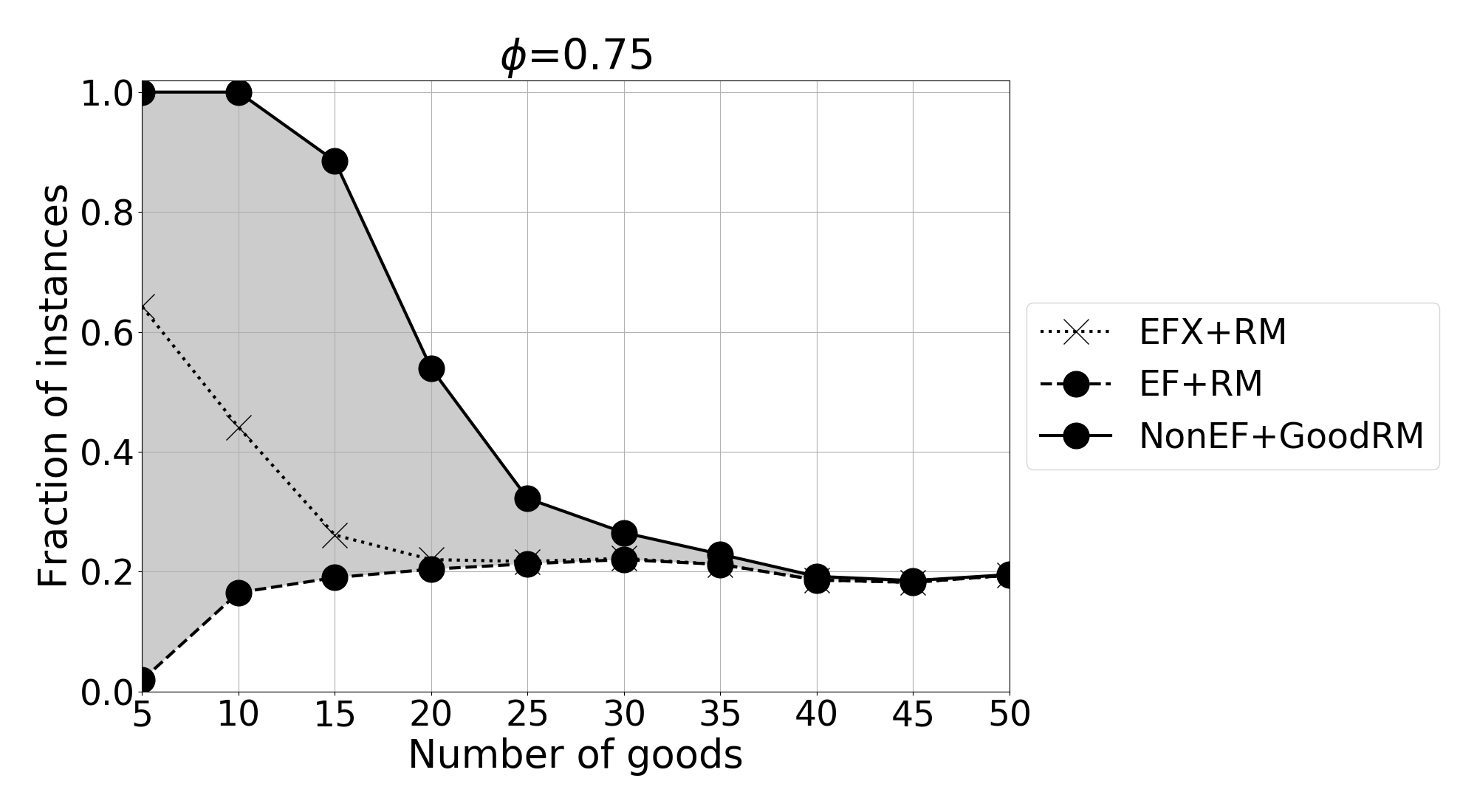}
	\hfill	\includegraphics[width=0.49\linewidth]{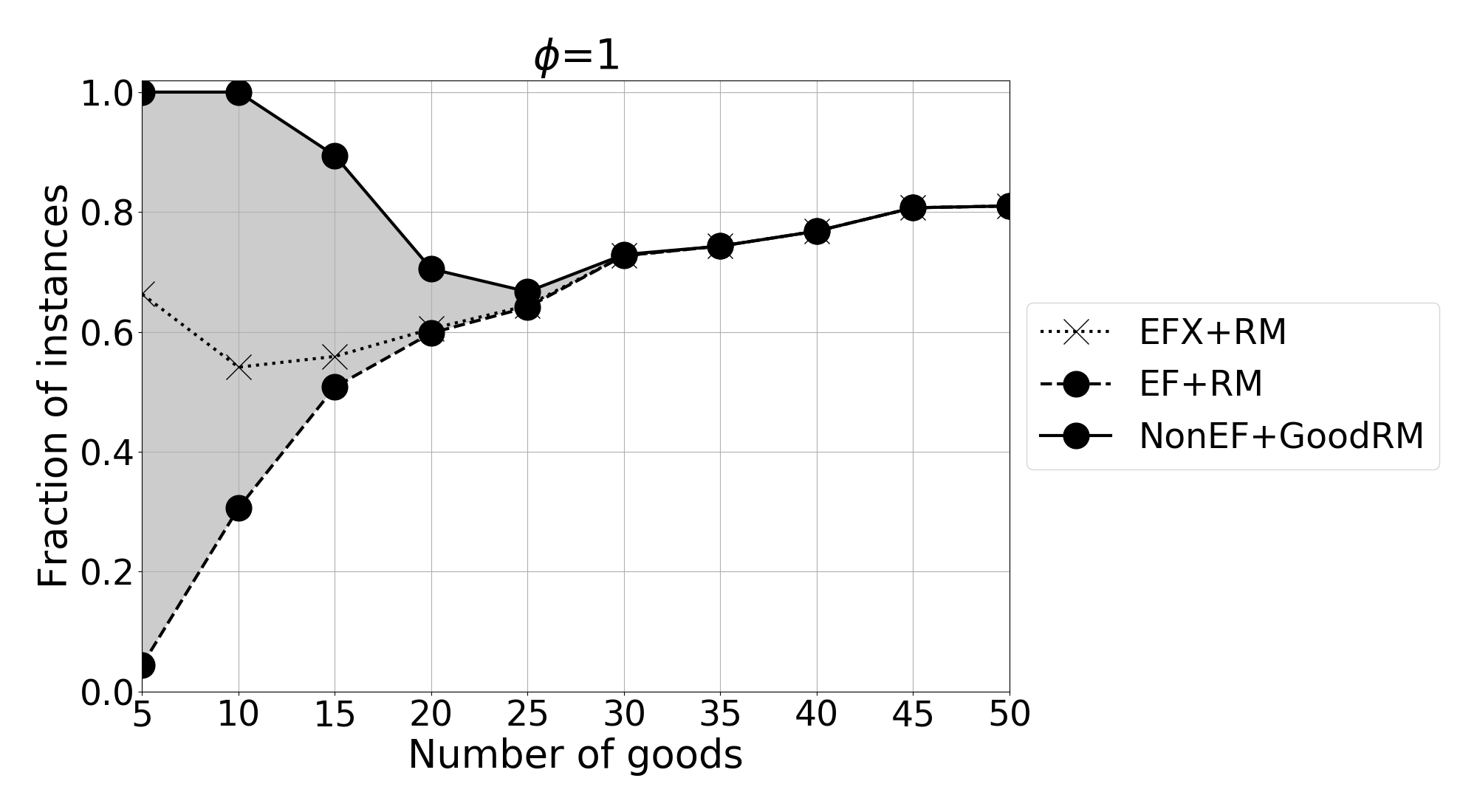}
	\caption{The plot compares the fraction of instances that do not admit an \EF{} allocation and the fraction of good instances that admit an \RM{} allocation where some agent is allocated one good or less. The number of agents is fixed $(n=5)$, and their preferences follow the Mallows model with dispersion parameter $\phi$.}
	\label{fig:mallows_rm_vs_efxrm}
\end{figure}

\Cref{fig:mallows_rm_vs_efxrm} presents our experimental results:
For each value of $\phi$, a solid line represents the fraction of instances that are either ``good'' but do not admit an \EF{} allocation or admit an \EF{}+\RM{} allocation, serving as an upper bound on the fraction of instances that admit an \EFX{}+\RM{} allocation. A dashed line represents the fraction of instances that admit an \EF{}+\RM{} allocation, and a dotted line represents the fraction of instances that admit an \EFX{}+\RM{} allocation.

We observe that for $\phi=1$, as the number of goods increases, the shaded region representing the fraction of ``good'' instances that do not admit an \EF{} allocation shrinks, and is close to zero for larger values of $m$. We believe that this is the reason that as the number of goods increases, the fraction of instances that do not admit an \EF{}+\RM{} allocation but do admit an \EFX{}+\RM{} allocation decreases even as the fraction of instances that admit an \EF{}+\RM{} allocation increases. We observe a similar trend for $\phi=0.25, 0.5$, and $0.75$.

\end{document}